\documentclass[runningheads,a4paper,envcountsame]{llncs}
\setcounter{tocdepth}{3}
\usepackage{graphicx}
\usepackage{amsmath, amssymb, color}

\usepackage{graphicx}
\usepackage{epsfig}
\usepackage{graphics}
\usepackage{array}
\usepackage{amsmath}
\usepackage{amsfonts}
\usepackage{amssymb}
\usepackage[algo2e,ruled,vlined,linesnumbered]{algorithm2e}
\usepackage{setspace}
\usepackage{fullpage} 

\usepackage{color}
\definecolor{orange}{rgb}{1,0.5,0}

\newtheorem{observation}[theorem]{Observation}

\newcommand{\N}{\mathbb{N}}
\newcommand{\R}{\mathbb{R}}

\newcommand{\M}{\mathcal{M}}
\newcommand{\B}{\mathcal{B}}
\newcommand{\Ce}{\mathcal{C}}
\newcommand{\Es}{\mathcal{S}}
\newcommand{\EE}{\mathcal{E}}

\renewcommand{\phi}{\varphi}

\DeclareMathOperator{\size}{size}
\DeclareMathOperator{\lsp}{lsp}
\DeclareMathOperator{\short}{sp}

\DeclareMathOperator{\MCB}{MCB}
\DeclareMathOperator{\LSC}{LSC}
\DeclareMathOperator{\blue}{blue}
\DeclareMathOperator{\green}{green}
\DeclareMathOperator{\decomp}{decomp^*}
\DeclareMathOperator{\expand}{expand^*}
\DeclareMathOperator{\de}{decomp}
\DeclareMathOperator{\ex}{expand}
\DeclareMathOperator{\shrink}{shrink}
\DeclareMathOperator{\dfs}{dfs}

\DeclareMathOperator{\GF}{GF}
\DeclareMathOperator{\degree}{deg}

\pagestyle{plain}
\begin{document}
\mainmatter  
\title{Computing Minimum Cycle Bases in Weighted Partial 2-Trees in Linear Time\thanks{This is the full version of~\cite{WG} which has appeared in the proceedings of the 39th International Workshop on Graph-Theoretic Concepts in Computer Science (WG 2013).}}
\titlerunning{Computing Minimum Cycle Bases in Weighted Partial 2-Trees in Linear Time}

\author{
Carola Doerr\inst{1,2}\thanks{Work initiated while the author was with Universit\'e Paris Diderot - Paris 7, LIAFA, Paris, France and the Max Planck Institute for Informatics.},
G. Ramakrishna\inst{3}\thanks{Work initiated while visiting the Max Planck Institute for Informatics during an internship.},
Jens M. Schmidt\inst{4}
}
\authorrunning{Doerr, Ramakrishna, Schmidt}

\institute{
Sorbonne Universit\'es, UPMC Univ Paris 06,  UMR 7906, LIP6, F-75005 Paris, France\and
CNRS, UMR 7906, LIP6  F-75005 Paris, France\and
Indian Institute of Technology Madras, India\and
Max Planck Institute for Informatics, Saarbr\"ucken, Germany
}

\toctitle{Computing Minimum Cycle Bases in Weighted Partial 2-Trees in Linear Time}
\tocauthor{Doerr, Ramakrishna, Schmidt}
\maketitle

\sloppy{ 
\begin{abstract}
We present a linear time algorithm for computing an implicit linear space representation of a minimum cycle basis (MCB) in weighted partial 2-trees; i.e., graphs of treewidth two. 
The implicit representation can be made explicit in a running time that is proportional to the size of the MCB.

For planar graphs, Borradaile, Sankowski, and Wulff-Nilsen [Min $st$-cut Oracle for Planar Graphs with Near-Linear Preprocessing Time, FOCS 2010] showed how to compute an implicit $O(n \log n)$ space representation of an MCB in $O(n \log^5 n)$ time. For the special case of partial 2-trees, our algorithm improves this result to linear time and space. Such an improvement was achieved previously only for outerplanar graphs [Liu and Lu: Minimum Cycle Bases of Weighted Outerplanar Graphs, IPL 110:970--974, 2010].
\end{abstract}
}

\sloppy{
\section{Introduction}

A \emph{cycle basis} of a graph $G$ is a minimum-cardinality set $\Ce$ of cycles in $G$ such that every cycle $C$ in $G$ can be written as the exclusive-or sum of a subset of cycles in $\Ce$. A \emph{minimum cycle basis} (MCB) of $G$ is a cycle basis that minimizes the total weight of the cycles in the basis.
Minimum cycle bases have numerous applications in the analysis of electrical networks, biochemistry, periodic timetabling, surface reconstruction, and public transportation, and have been intensively studied in the computer science literature. We refer the interested reader to~\cite{Kavitha} for an exhaustive survey.
It is---both from a practical and a theoretical viewpoint---an interesting task to compute minimum cycle bases efficiently.

All graphs considered in this work are simple graphs $G=(V,E)$ with a non-negative edge-weight function $w:E \rightarrow \R_{\geq 0}$. (Computing MCBs for graphs with cycles of negative weight is an NP-hard problem~\cite{Kavitha}. In all previous work that we are aware of it is therefore assumed that the edge-weights are non-negative.) Throughout this work, $m=|E|$ denotes the size of the edge set and $n=|V|$ the size of the vertex set of $G$.

\subsection{Previous Work}

The first polynomial-time algorithm for computing MCBs was presented by Horton in 1987~\cite{Horton87}. His algorithm has running time $O(m^3 n)$.
This was improved subsequently in a series of papers by different authors, cf.~\cite{Kavitha} or~\cite{LiuL10} for surveys of the history.
The currently fastest algorithms for general graphs are a deterministic $O(m^2n / \log n)$ algorithm of Amaldi, Iuliano, and Rizzi~\cite{Amaldi10MCB} and a Monte Carlo based algorithm by Amaldi, Iuliano, Jurkiewicz, Mehlhorn, and Rizzi~\cite{AmaldiIJMR09} of running time $O(m^{\omega})$, where $\omega$ is the matrix multiplication constant. 

The algorithm from~\cite{AmaldiIJMR09} is deterministic on planar graphs, and has a running time of $O(n^2)$. 
This improved the previously best known bound by Hartvigsen and Mardon~\cite{HartvigsenM94}, which is of order $n^2 \log n$. 
The currently best known algorithm on planar graphs is due to Borradaile, Sankowski, and Wulff-Nilsen~\cite{FOCS10}. It constructs an $O(n \log n)$ space implicit representation of an MCB in planar graphs in time $O(n \log^5 n)$.\footnote{\textbf{Note added in proof.} 
The arxiv paper~\cite{WulffArxiv} announces an improved running time of $O(n \log^4 n)$.}

Faster algorithms for planar graphs are known only for the special case of outerplanar graphs. For unweighted outerplanar graphs, Leydold and Stadler~\cite{LeydoldS98} presented a linear time algorithm. 
More recently, Liu and Lu~\cite{LiuL10} presented a linear time, linear space algorithm to compute an MCB of a weighted outerplanar graph (using an implicit representation). This is optimal both in terms of time and space. 

\subsection{Our Result}
\label{sec:ourresult}
In this contribution, we consider the computation of minimum cycle bases of partial 2-trees. The class of partial 2-trees (also referred to as graphs of treewidth two) is a strict superclass of outerplanar graphs. It includes 2-trees and series-parallel graphs (as shown in~\cite[Theorem 42]{Bodlaender98}, a graph $G$ is a partial 2-tree if and only if
every 2-connected component of $G$ is a series-parallel graph).
Partial 2-trees are planar. 
They are precisely the graphs that forbid a $K_{4}$-subdivision. 

Partial 2-trees are extensively studied in the computer science literature, e.g., a deterministic logspace algorithm is presented to canonize and test isomorphism for partial 2-trees~\cite{Arvind08}; plane embeddings of partial 2-trees are described in~\cite{Proskurowski96}; parallel strategies can be used to find the most vital edges~\cite{Ho98}; and the oriented chromatic number of partial 2-trees is studied in~\cite{Ochem08}. 
Partial 2-trees can be recognized in linear time~\cite{Bodlaender96}.

Our main result is a linear time algorithm for computing an implicit $O(n)$-space representation of a minimum cycle basis in partial 2-trees. The explicit representation can be obtained in additional time that is proportional to the size of the MCB. 
Since partial 2-trees are planar graphs, the previously best known algorithm was the one by Borradaile, Sankowski, and Wulff-Nilsen~\cite{WulffArxiv}. That is, for the special case of partial 2-trees we are able to improve their running time by a factor of $\Theta(\log^4 n)$.

Our result is achieved by an iterative decomposition of the partial 2-tree into outerplanar graphs to which the recent result of Liu and Lu~\cite{LiuL10} can be applied. We state our main theorem below. 
As will be discussed in Section~\ref{sec:conclusions} the ideas presented here do not carry over to planar $3$-trees. 
It thus seems that substantially new ideas are required to improve the running time of~\cite{WulffArxiv} for graph classes containing graphs of treewidth at least three. 

\begin{theorem}
\label{thm:main}
Given a partial $2$-tree $G=(V,E)$ on $n$ vertices and a non-negative weight function $w: E \rightarrow \R_{\geq 0}$, 
a minimum cycle basis $\B$ of $G$ (implicitly encoded in $O(n)$ space) can be obtained in $O(n)$ time. 

Moreover, $\B$ can be reported explicitly in time $O(\size(\B))$, where $\size(\B)$ is the number of edges in $\B$ counted according to their multiplicity.
\end{theorem}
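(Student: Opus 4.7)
My plan is to reduce MCB computation in a partial $2$-tree to iterated MCB computations in outerplanar subgraphs, so that the linear-time algorithm of Liu and Lu~\cite{LiuL10} can be invoked essentially as a black box. I would first split $G$ into its biconnected components in linear time using standard block-cut-tree machinery. Since every cycle of $G$ lies entirely inside a single block, an MCB of $G$ is the disjoint union of MCBs of its blocks, and it therefore suffices to handle a $2$-connected partial $2$-tree, which by Bodlaender's characterization cited in the introduction is a $2$-connected series-parallel graph.

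For such a block $B$ I would compute its SP-tree $T$ in linear time. Every internal node of $T$ is either an $S$-node (series composition) or a $P$-node (parallel composition) attached to a terminal pair $(s,t)$, and every leaf represents an edge of $B$. I would then process $T$ bottom-up, maintaining at each subtree (i) a shortest $s$-$t$ path $\short$ through the corresponding subgraph, recorded implicitly as a pointer into $T$ rather than as an edge list, and (ii) the cycles already contributed to the basis $\B$. At an $S$-node the children's shortest paths merely concatenate and no new cycle is born. At a $P$-node with parallel children $C_1,\ldots,C_k$, their shortest $s$-$t$ paths together with the edge $st$ (if present) form an outerplanar ``book'' around the terminals; running Liu--Lu on that book yields an optimal set of cycles, each recorded implicitly as a pair of pointers into $T$, plus the aggregate shortest $s$-$t$ path that is propagated further up.

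Correctness would rest on two standard facts for series-parallel graphs: at a $P$-node with $k$ parallel children one must add exactly $k-1$ basis cycles independent of those produced below, and among all such choices the cycles built from the children's shortest $s$-$t$ paths minimize the total weight (a matroid-exchange argument, which in the outerplanar restriction is exactly what Liu--Lu solves optimally). Concatenating the outputs over all $P$-nodes of $T$ therefore yields an MCB of $B$, and the union over all blocks yields an MCB of $G$. The storage is $O(1)$ per basis cycle plus the SP-tree itself, giving $O(n)$ overall. For the explicit output, each pointer-pair is unfolded recursively into its edge list; unfolding a cycle $C$ takes time $O(|C|)$, so the total reporting time is $O(\size(\B))$.

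The main obstacle I anticipate is guaranteeing truly linear total time: naive bookkeeping at a $P$-node can re-traverse long shortest paths inherited from deeper subtrees and easily blow the running time up to something superlinear. To avoid this, shortest-path information must be manipulated exclusively through pointers into $T$ and propagated upward in $O(1)$ per SP-tree node, and Liu--Lu must be invoked only on locally assembled gadgets whose sizes charge disjointly to distinct nodes of $T$. Making this amortization precise, and verifying that Liu--Lu can be fed pointer-paths as ``super-edges'' without losing its linear runtime, is where I would expect to spend the bulk of the technical effort.
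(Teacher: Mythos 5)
There is a genuine error in your key correctness claim at $P$-nodes, namely that ``the cycles built from the children's shortest $s$-$t$ paths minimize the total weight.'' This is false, because a purely bottom-up scheme commits to cycles before it knows whether the rest of the graph offers a cheaper connection between the terminals. Concretely, take $G$ with vertices $s,t,x,y$ and edges $st$ of weight $1$, $sx$ of weight $1$, $xt$ of weight $4$ (call it $Q_1$), and the two-edge path $x$-$y$-$t$ of weight $2+3=5$ (call it $Q_2$). The SP-tree has an inner $P$-node on the terminal pair $(x,t)$ with children $Q_1$ and $Q_2$; your rule adds the cycle $Q_1\cup Q_2$ of weight $9$ there, and then at the root $P$-node adds the cycle $st\cup sx\cup Q_1$ of weight $6$, for total weight $15$. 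But the minimum cycle basis of $G$ consists of the two cycles $st\cup sx\cup Q_1$ (weight $6$) and $st\cup sx\cup Q_2$ (weight $7$), total weight $13$: the optimal cycle closing off $Q_2$ uses the globally shortest $x$-$t$ path $x$-$s$-$t$ of weight $2$, which lies outside the subtree processed at the inner $P$-node. So the ``standard fact'' you invoke (and the matroid-exchange argument meant to support it) does not hold, and the algorithm built on it outputs a non-minimum basis; the flaw is algorithmic, not just a missing proof detail.

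What is missing is exactly the mechanism the paper introduces: when splitting at a separation pair $\{u,v\}$ with three or more components, every part that does not contain a globally shortest $u$-$v$ path receives a marked placeholder edge whose weight is the \emph{global} shortest-path distance between $u$ and $v$ (computed via a distance oracle in the spirit of Chaudhuri--Zaroliagis on a tree decomposition), and the recursion inside each part may use this placeholder; only after all such splits does one run Liu--Lu on the resulting outerplanar pieces. Proving that this yields an MCB is the actual technical core (the paper does it through the characterization that the lex short cycles of a weighted partial 2-tree form an MCB, plus a decomposition lemma with a weight-preserving bijection to handle arbitrary rather than lex shortest paths); it is not a routine exchange argument. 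Your proposal could be repaired by propagating, at each $P$-node, a tight virtual terminal edge weighted by the global terminal distance into every child before recursing, but as written the bottom-up construction and its correctness argument are incorrect. Separately, tight versus long edges (an edge $st$ heavier than the $s$-$t$ distance) also need explicit treatment, as Liu--Lu's characterization of lex short cycles as induced cycles presupposes all edges are tight.
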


Note in Theorem~\ref{thm:main} that although $\B$ has an \emph{implicit} representation of linear size, the \emph{explicit size} of $\B$ may be quadratic. This is true already for outerplanar graphs, cf.~\cite{LiuL10} for a simple ladder graph $G$ in which the unique MCB of $G$ contains $\Theta(n^2)$ edges.

For the proof of Theorem~\ref{thm:main} it will be crucial that the set of \emph{lex short cycles} (cf. Section~\ref{sec:lex}) in any weighted partial 2-tree forms a minimum cycle basis~\cite{CTW12}. As lex short cycles are inherently defined by shortest paths, we will need a data structure that reports the distance between two vertices in constant time (e.g., for checking whether an edge is the shortest path between its two endpoints). In outerplanar graphs, such a data structure exists due to Frederickson and Jannardan~\cite{FredericksonJ88}. For our more general case, we will instead extend a result of Chaudhuri and Zaroliagis~\cite[Lemma~3.2]{ChaudhuriZ00} on weighted partial $k$-trees which supports distance queries between two vertices of a common bag in a (fixed) tree decomposition of $G$ in constant time.
Using this extension, we can report a shortest path $P$ between any two such vertices in time $O(|E(P)|)$.  

\section{Graph Preliminaries, Partial 2-Trees, and Lex Shortest Paths}
\label{sec:definitions}

We consider weighted undirected graphs $G = (V,E)$ where $V$ denotes the set of vertices, $E$ the set of edges, and $w: E \rightarrow \mathbb{R}_{\geq 0}$ a non-negative weight function. We assume the usual unit-cost RAM as model of computation (real RAM when the input contains reals), but we do not use any low-level bit manipulations in our algorithm to improve the running time.
All graph classes considered in this paper are sparse, i.e., we have a linear dependence $m = O(n)$.

The weight $w(P)$ of a path $P$ in $G$ is the sum of weights of edges in $P$;
i.e., $w(P) := \sum_{e \in P}{w(e)}$.
A set $X \subset V$ of vertices in $G$ is said to be a \emph{vertex separator} of $G$ if the removal of the vertices $X$ increases the number of connected components. A vertex separator $X$ is \emph{minimal} if no proper subset of $X$ is a vertex separator. 
For $Y \subseteq V$, we write $G[Y]$ for the subgraph of $G$ that is induced by $Y$ and we write $G-Y$ for the subgraph that is obtained from $G$ by deleting the vertices in $Y$ (and all incident edges).
For $k, \ell \in \N$, we denote by $K_{\ell}$ the complete graph on $\ell$ vertices and by $K_{\ell,k}$ we denote the the complete bipartite graph on $\ell$ and $k$ vertices.
For a graph $H$, an $H$-subdivision is a graph obtained from $H$ by replacing its edges with non-empty and pairwise vertex-disjoint paths. In this work, we will be mainly concerned with $K_{2,k}$-subdivisions for $k \geq 3$. In such a $K_{2,k}$-subdivision we call the two vertices of degree greater than two the \emph{branch vertices} of the subdivision.

\subsection{Minimum Cycle Bases}
\label{sec:MCBs}
A cycle $C$ in $G$ is a connected subgraph of $G$ in which every vertex has degree two. 
Let $C_1,\ldots,C_k$ be cycles in $G$ and let $\oplus$ denote the symmetric difference function. Then the sum 
$\Es := C_1 \oplus \ldots \oplus C_k$ is the set of
edges appearing an odd number of times in the multi-set $\{C_1, \ldots ,C_k\}$. As is well known, $\Es$ is a union of cycles in $G$.

We say that a set $\Ce = \{C_1,\ldots ,C_k\}$ of cycles of $G$ \emph{spans the cycle space of
$G$} if every cycle $C$ of $G$ can be written as a sum $C_{i_1} \oplus \ldots \oplus C_{i_{\ell}}$ of elements
of $\Ce$. In this case, we say that $C_{i_1}, \ldots, C_{i_{\ell}}$ \emph{generate}~$C$.
The size $\emph{size}(\Ce)$ of $\Ce$ is the number of edges in $\Ce$ counted according to their multiplicity.

A \emph{cycle basis} of $G$ is a minimum cardinality set of cycles that spans the cycle space of $G$. Put differently, a cycle basis is a maximal set of independent cycles, where we consider a set of cycles to be independent if their incidence vectors in $\{0,1\}^m$ are independent over the field $\GF(2)$.
The cardinality of a cycle basis is sometimes referred to as the \emph{dimension} of the cycle space of $G$.
The dimension of the cycle space of any simple weighted graph equals $m-n+1$~\cite{Bondy2008}.

We are interested in identifying a \emph{minimum cycle basis} ($\MCB$) of $G$; i.e., a cycle basis $\Ce$ of minimum total weight $\sum_{C \in \Ce}{w(C)}$. 

If $G_1, \ldots, G_k$ are the $2$-connected components of the graph $G$ and if $\Ce_1, \ldots, \Ce_k$ are minimum cycle bases of $G_1, \ldots, G_k$, respectively, then the union $\Ce_1 \cup \ldots \cup \Ce_k$ is a minimum cycle basis of $G$.
In what follows, we will therefore assume without loss of generality that $G$ is $2$-connected.

\subsection{Tree Decompositions and Partial 2-Trees}

A \emph{tree decomposition} of a graph $G$ is a pair $(\{X_1, \ldots, X_r\}, T)$ of a set of \emph{bags} $X_1, \ldots, X_r$ and a tree $T$ with vertex set $V(T)=\{X_1, \ldots, X_r\}$ that satisfies the following three properties:
\begin{enumerate}
\item $X_1 \cup \ldots \cup X_r = V$,
\item For each edge $\{u,v\} \in E$, there is an index $1 \leq i \leq r$ such that $\{u,v\}\subseteq X_{i}$, and 
\item For each vertex $v \in V$, the bags in $T$ containing $v$ form a subtree of $T$ (\emph{subtree property}).
\end{enumerate}
The \emph{treewidth} of $(\{X_1, \ldots X_r\}, T)$ is $\max \{|X_1|, \ldots |X_r|\}-1$. 
The \emph{treewidth} of $G$ is the minimum treewidth over all possible tree decompositions of $G$. We call a tree decomposition $(\{X_1, \ldots X_r\}, T)$ \emph{optimal} if the treewidth of $T$ is equal to the treewidth of $G$. To distinguish between the edges of $G$ and $T$, we refer to the edges of $T$ as \emph{links}. 

A \emph{$k$-tree} is a graph of treewidth $k$ for which the addition of any edge between non-adjacent vertices would increase the treewidth. The following lemma is folklore and characterizes $k$-trees.

\begin{lemma}
\label{constructivecharacterization}
A graph $G$ is a $k$-tree if and only if $G$ can be constructed from a $K_{k+1}$ by iteratively adding new vertices such that the neighborhood of each such vertex is a $k$-clique.
\end{lemma}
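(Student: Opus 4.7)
The plan is to prove both implications by induction on $n=|V(G)|$, with base case $n=k+1$. In the base case the only $k$-tree on $k+1$ vertices is $K_{k+1}$ (it has treewidth $k$ and there are no non-adjacent pairs of vertices), which is also the only graph that the construction produces on $k+1$ vertices, so both directions are trivial.

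For the ``if'' direction, suppose $G$ arises from $G'=G-v$ by attaching a new vertex $v$ whose neighborhood is a $k$-clique $C$. By induction, $G'$ is a $k$-tree, hence has a tree decomposition $(\{X_1,\dots,X_r\},T)$ of width $k$. Since $C$ is a clique in $G'$, the standard fact that every clique of a graph is contained in some bag of any tree decomposition produces an index $i$ with $C\subseteq X_i$; attaching a new bag $C\cup\{v\}$ (of size $k+1$) as a leaf of $T$ adjacent to $X_i$ yields a width-$k$ tree decomposition of $G$, and treewidth $k$ is also a lower bound since $G$ contains $K_{k+1}$. For the maximality part, I would use an edge-count argument: an easy induction on the construction shows $|E(G)|=kn-\binom{k+1}{2}$, while a separate induction on any width-$k$ tree decomposition shows that any graph of treewidth $\le k$ has at most $kn-\binom{k+1}{2}$ edges (the root bag contributes $\binom{k+1}{2}$ and each forget/introduce step at most $k$ new edges). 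Hence $G$ is edge-maximal for treewidth $k$, so adding any edge must strictly increase the treewidth.

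For the ``only if'' direction, let $G$ be a $k$-tree on $n\ge k+2$ vertices and fix an optimal tree decomposition $(\{X_1,\dots,X_r\},T)$ of width $k$. I would pick a leaf bag $X_i$ with neighbor $X_j$ in $T$ and choose a vertex $v\in X_i\setminus X_j$ (which exists after discarding redundant leaves). By the subtree property, every neighbor of $v$ in $G$ lies in $X_i$, so $\deg(v)\le k$. Maximality of $G$ forces $N(v)=X_i\setminus\{v\}$ and $|N(v)|=k$ and $G[N(v)]$ to be a clique: otherwise one could either add a missing edge from $v$ inside $X_i$, or a missing edge inside $N(v)$, without leaving the same tree decomposition valid, contradicting that adding any edge increases the treewidth. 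Thus $v$ is simplicial with a $k$-clique neighborhood. It remains to verify that $G-v$ is itself a $k$-tree; deleting $v$ from the bag $X_i$ gives a width-$k$ tree decomposition of $G-v$, and if one could add an edge $\{a,b\}$ to $G-v$ without exceeding width $k$, then reattaching $N(v)\cup\{v\}$ as a new leaf bag would produce a width-$k$ decomposition of $G+\{a,b\}$, contradicting the $k$-tree property of $G$. Applying the induction hypothesis to $G-v$ and appending $v$ as the last vertex of the construction finishes the argument.

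The main obstacle is the ``only if'' direction, specifically extracting a simplicial vertex of degree exactly $k$ from an arbitrary optimal tree decomposition and then verifying that its removal preserves edge-maximality. The key idea is that at every leaf bag the subtree property localizes neighborhoods, after which the maximality hypothesis on $G$ forces the leaf bag to be completely filled by a clique plus one private vertex. Once this is established, the inductive step is essentially bookkeeping, and the ``if'' direction reduces to the well-known clique-in-a-bag property together with the edge-count bound for bounded-treewidth graphs.
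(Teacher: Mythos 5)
The paper gives no proof of this lemma at all (it is stated as folklore), so the only question is whether your argument stands on its own. Most of it does: the base case, the ``if'' direction via the clique-in-a-bag fact together with the edge bound $kn-\binom{k+1}{2}$ for treewidth-$k$ graphs, and the reattachment argument for edge-maximality of $G-v$ are all sound. However, there is a genuine gap exactly at the point you yourself flag as the main obstacle: the claim that maximality forces $|N(v)|=k$. Your justification (``otherwise one could add a missing edge from $v$ inside $X_i$, or a missing edge inside $N(v)$'') only yields $N(v)=X_i\setminus\{v\}$ and that $N(v)$ induces a clique; it gives nothing when the leaf bag itself is small. An optimal tree decomposition only bounds the \emph{maximum} bag size by $k+1$; a leaf bag $X_i$ may have $|X_i|\le k$, in which case $v$ has degree at most $k-1$, no edge inside $X_i$ is missing, and no contradiction arises---yet the construction requires the neighborhood of the removed vertex to be a clique of size exactly $k$. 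To close this you need an extra step: either normalize to a smooth decomposition in which every bag has size exactly $k+1$ (the same normalization the paper invokes via Bodlaender's algorithm), or argue directly that if $|X_i|\le k$ you may pick $w\in X_j\setminus X_i$ (nonempty after the cleanup you already perform), insert $w$ into the bag $X_i$ while keeping a width-$k$ tree decomposition, and then the edge $\{v,w\}$ could be added to $G$ without increasing the treewidth, contradicting maximality.

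A second, smaller omission: applying the induction hypothesis to $G-v$ requires its treewidth to be exactly $k$, not merely at most $k$. Your reattachment argument in fact supplies this (if the treewidth of $G-v$ were at most $k-1$ and $G-v$ is not complete, adding one edge keeps it at most $k$ and reattaching the bag $N(v)\cup\{v\}$ contradicts maximality of $G$; if $G-v$ is complete it is $K_{k+1}$ and you are in the base case), but this should be said explicitly. With these two repairs the proof is correct, and since the paper omits a proof entirely, your argument is a self-contained replacement rather than a variant of an approach in the paper.
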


A subgraph of a $k$-tree is called a \emph{partial $k$-tree}. 
As mentioned in the introduction, partial 2-trees form a strict superclass of outerplanar graphs. In fact, while outerplanar graphs are characterized by the forbidden minor set $\{K_4,K_{2,3}\}$, partial $2$-trees have the forbidden minor set $\{K_4\}$. 
Equivalently, a partial 2-tree is outerplanar if and only if it does not contain a $K_{2,3}$-subdivision (as a subgraph).
The following statement is taken from~\cite{CTW12}. 

\begin{lemma}[Lemma 2.4 in~\cite{CTW12}]
\label{lem:K23minor}
A partial 2-tree $G$ is not outerplanar if and only if $G$ contains a $K_{2,3}$-subdivision. If $G$ contains a $K_{2,k}$-subdivision for $k \geq 3$, each of its $k$ non-branch vertices is contained in a separate connected component of $G-\{u,v\}$; in particular, $G-\{u,v\}$ has $k \geq 3$ connected components.
\end{lemma}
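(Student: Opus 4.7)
The plan is to treat the two claims separately. For the first equivalence, I would invoke the Chartrand-Harary forbidden-minor characterization of outerplanar graphs, which states that a graph is outerplanar if and only if it contains neither $K_4$ nor $K_{2,3}$ as a minor. Combining this with the classical fact that for any fixed graph $H$ with $\Delta(H) \leq 3$ a graph contains $H$ as a minor if and only if it contains an $H$-subdivision---applicable here since both $K_4$ and $K_{2,3}$ have maximum degree $3$---reduces the characterization to forbidding $K_4$- and $K_{2,3}$-subdivisions. A partial $2$-tree by definition excludes $K_4$ as a minor and thus as a subdivision, so the two forbidden subdivisions collapse to a single one: $G$ is not outerplanar if and only if $G$ contains a $K_{2,3}$-subdivision.

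For the second claim, suppose $G$ contains a $K_{2,k}$-subdivision with branch vertices $u,v$ joined by $k$ internally vertex-disjoint paths $P_1,\ldots,P_k$, where $k \geq 3$. Assume for contradiction that for some $i \neq j$ an interior vertex of $P_i$ and an interior vertex of $P_j$ lie in the same connected component of $G-\{u,v\}$. Then there is a path $Q$ in $G-\{u,v\}$ between them. Walking $Q$ from its $P_i$-endpoint and shortcutting at every re-entry into the same $P_\ell$ it just left, I can extract a subpath $Q'$ from some $x \in \mathrm{int}(P_i)$ to some $y \in \mathrm{int}(P_{j'})$ with $j' \neq i$, whose interior avoids $P_1 \cup \cdots \cup P_k$. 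Setting $j := j'$, the four vertices $\{u,v,x,y\}$ together with the two halves of $P_i$ (a $u$-$x$ path and an $x$-$v$ path), the two halves of $P_j$, the path $Q'$, and any third path $P_\ell$ with $\ell \notin \{i,j\}$ (which exists because $k \geq 3$) constitute six pairwise internally vertex-disjoint paths realizing a $K_4$-subdivision on $\{u,v,x,y\}$, contradicting the partial-$2$-tree hypothesis.

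The main delicate point is the shortcutting argument guaranteeing that $Q'$'s interior avoids \emph{every} $P_\ell$, not just $P_i \cup P_j$; once this is secured the $K_4$-construction is essentially combinatorial bookkeeping. The final ``$k$ components'' clause is then immediate: picking one interior vertex on each $P_\ell$ (for example the $k$-side original $K_{2,k}$-vertices themselves) produces $k$ vertices which, by what has just been shown, must lie in $k$ pairwise distinct connected components of $G-\{u,v\}$.
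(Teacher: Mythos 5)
Your proof is correct. Note that the paper itself offers no proof of this statement---it is imported verbatim as Lemma~2.4 of~\cite{CTW12}---so there is no in-paper argument to compare against; your derivation (Chartrand--Harary forbidden minors for outerplanarity, the equivalence of minors and subdivisions for graphs of maximum degree three, and the $K_4$-subdivision contradiction built from $u,v,x,y$) is the standard route and is sound. The one point you rightly flag, securing that the interior of $Q'$ avoids \emph{all} of $P_1\cup\cdots\cup P_k$, is cleanest if phrased not as iterative shortcutting but as a single first/last-vertex choice: let $y$ be the first vertex of $Q$ (walked from its endpoint on $\mathrm{int}(P_i)$) lying on some $P_{j'}$ with $j'\neq i$, and let $x$ be the last vertex of $\mathrm{int}(P_i)$ on $Q$ before $y$; the segment of $Q$ between $x$ and $y$ then has the required property by the two extremal choices, and the rest of your $K_4$ construction (with $x\neq y$ and $x,y\notin\{u,v\}$ guaranteed by internal disjointness of the $P_\ell$) goes through as you describe.
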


%

\subsection{Lex Shortest Paths and Lex Short Cycles}\label{sec:lex}
It is known (Proposition 4.5 in~\cite{HartvigsenM94}) that for any edge-weighted simple graph $G$ the set of so-called \emph{lex short cycles} contains a minimum cycle basis. For outerplanar graphs~\cite{LiuL10} and partial 2-trees~\cite{CTW12}, the whole set of lex short cycles forms a minimum cycle basis. 

\begin{definition}[Lex Shortest Paths]
\label{def:lex}
Let $G=(V,E)$ be a graph with weight function $w:E \rightarrow \R_{\geq 0}$. Let $\sigma: V \rightarrow \{1,2, \ldots, n\}$ be an arbitrary ordering of the vertices.

A path $P$ between two distinct vertices $u,v \in V$ is called a \emph{lex shortest path} if for any other path $P'$ between $u$ and $v$ either
$w(P')>w(P)$ or 
\emph{(}$w(P')=w(P)$ and $|E(P')| > |E(P)|$\emph{)} or 
\emph{(}$w(P')=w(P)$, $|E(P')| = |E(P)|$ and $\min_{y \in V(P')\setminus V(P)} \sigma(y) > \min_{y \in V(P)\setminus V(P')} \sigma(y)$\emph{)} holds. 
\end{definition}

It is easily verified that between any two vertices $u,v$ in $G$ there exists exactly one lex shortest path. We refer to this path as $\lsp(u,v)$ (cf. also Proposition 4.1 in~\cite{HartvigsenM94}). If the dependence of the graph is not clear from the context, we write $\lsp_G(u,v)$. Note that every subpath of a lex shortest path is a lex shortest path.

\begin{definition}[Lex Short Cycles]
A \emph{lex short cycle} $C$ is a cycle that contains for any two vertices $u,v \in C$ the lex shortest path $\lsp(u,v)$. For an edge-weighted graph $G$, we denote by $\LSC(G)$ the set of all lex short cycles in $G$.
\end{definition}

\begin{lemma}[\cite{HartvigsenM94,CTW12}]
\label{lem:LSCMCB} 
For any edge-weighted simple graph $G$, there is a set $\B \subseteq \LSC(G)$ such that $\B$ is a minimum cycle basis for $G$. 
Additionally, the set of lex short cycles $\LSC(G)$ forms a minimum cycle basis if $G$ is a weighted partial 2-tree.
\end{lemma}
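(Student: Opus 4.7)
The plan is to prove the two assertions of Lemma~\ref{lem:LSCMCB} separately, using a matroid-style exchange argument for the first half and a structural induction tailored to partial 2-trees for the second.

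For the first half, I would start with an arbitrary MCB $\B$ of $G$ and show by exchange that $\B$ can be transformed into one lying inside $\LSC(G)$. If some $C \in \B$ is not lex short, then by definition there exist vertices $u,v \in V(C)$ for which $P := \lsp(u,v)$ is not a subpath of $C$. Writing $C = Q_1 \cup Q_2$ as the union of its two $u$-$v$ subpaths and setting $C_i := P \cup Q_i$, one has $C = C_1 \oplus C_2$. Since $P$ is lex shorter than either $Q_j$, in particular $w(P) \le w(Q_j)$, whence $w(C_1), w(C_2) \le w(C)$. The classical matroid exchange property of weighted cycle bases then allows $C$ to be replaced in $\B$ by either $C_1$ or $C_2$ while keeping $\B$ an MCB. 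Iterating, with the number of edges of basis cycles lying off any lex shortest path serving as a strictly decreasing monovariant, terminates in an MCB contained in $\LSC(G)$.

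For the second half, by the first half there already exists an MCB $\B \subseteq \LSC(G)$, and this $\B$ has size $m - n + 1$ by the cycle-space dimension formula. It therefore suffices to show $|\LSC(G)| \le m - n + 1$ when $G$ is a 2-connected partial 2-tree. I would argue by induction on $n$, distinguishing two cases via Lemma~\ref{lem:K23minor}: either $G$ is outerplanar---in which case the bound follows from the result of Liu and Lu~\cite{LiuL10}---or $G$ contains a $K_{2,k}$-subdivision with $k \ge 3$, whose two branch vertices $u,v$ separate $G$ into $k$ components $G_1,\ldots,G_k$. Each lex short cycle of $G$ either lies entirely in some $G_i \cup \{u,v\}$, contributing inductively, or passes through both $u$ and $v$ via exactly two of the branches; the uniqueness of $\lsp(u,v)$ restricted to each $G_i \cup \{u,v\}$ bounds the latter contribution by $k-1$, matching the extra $k-1$ independent cycles the $K_{2,k}$-subdivision adds to the cycle space.

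The main obstacle I expect lies in the second half, specifically in the bookkeeping of lex short cycles that cross between components of $G - \{u,v\}$. Lemma~\ref{lem:K23minor} is decisive here: the fact that each of the $k$ components contains exactly one non-branch vertex of the subdivision rigidly localizes any lex short cycle that does not pass through both $u$ and $v$. Closing the induction cleanly also requires delicate tie-breaking: the second and third clauses of Definition~\ref{def:lex} (path length, and then vertex labels) are essential to guarantee uniqueness of $\lsp(u,v)$, and hence rigidity of the lex short cycles, even when several $u$-$v$ paths share the same minimum weight.
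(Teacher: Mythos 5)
First, note that the paper itself does not prove this lemma: the first claim is imported from Hartvigsen--Mardon (their Proposition~4.5) and the second from the cited CTW12 paper, so your proposal is an attempt to reprove imported results, and as written it has genuine gaps. In the exchange argument, the sets $C_i := P \cup Q_i$ need not be cycles, and the identity $C = C_1 \oplus C_2$ fails whenever $P$ shares edges with $C$: a direct computation gives $(P\cup Q_1)\oplus(P\cup Q_2) = E(C)\setminus E(P)$, so any edges of $C$ lying on $P$ are lost. Since $P=\lsp(u,v)\not\subseteq C$ may weave in and out of $C$ (exactly the situation analyzed in the proof of Lemma~\ref{Lemma1}), you must instead work with $P\oplus Q_1$ and $P\oplus Q_2$, each an edge-disjoint union of cycles of total weight at most $w(C)$, and exchange $C$ for one constituent cycle outside the span of $\B\setminus\{C\}$. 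More seriously, termination is not established: your ``monovariant'' (edges of basis cycles lying off lex shortest paths) is neither well defined nor shown to decrease, and since the replacement cycle can have the same weight \emph{and} the same number of edges as $C$, one needs the full three-level order of Definition~\ref{def:lex} (weight, then edge count, then $\sigma$-labels) to obtain a well-founded strictly decreasing potential; supplying this is essentially reproving Hartvigsen--Mardon's Proposition~4.5, which you have not done.

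For the second claim, your reduction is sound in outline: an MCB of size $m-n+1$ inside $\LSC(G)$ together with $|\LSC(G)|\le m-n+1$ would force $\LSC(G)$ to be an MCB. But the decisive counting step---at most $k-1$ lex short cycles of $G$ pass through both $u$ and $v$, i.e.\ at most one per component of $G-\{u,v\}$ other than the one containing $\lsp(u,v)$---is only asserted. Uniqueness of $\lsp(u,v)$ does not yield it: you must exclude two distinct $u$--$v$ paths $P\ne P'$ inside the \emph{same} component $H_h$ such that both $\lsp(u,v)\cup P$ and $\lsp(u,v)\cup P'$ are lex short. Such paths necessarily meet internally (two internally disjoint $u$--$v$ paths in one component would contradict Lemma~\ref{lem:K23minor}, cf.\ Observation~\ref{cor:nocycle}), and ruling out the remaining configurations requires an argument on the shared internal vertices that exploits the $\sigma$ tie-breaking; this is precisely the nontrivial content of the CTW12 characterization that the lemma cites, not a consequence of uniqueness of a single shortest path. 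Two further loose ends: the parts $G[V(H_h)\cup\{u,v\}]$ need not be $2$-connected, so the induction hypothesis as you phrase it does not directly apply to them (you would have to induct over blocks or drop the connectivity assumption), and the outerplanar base case leans on the Liu--Lu statement whose published uniqueness assumption is exactly what this paper has to patch via Lemma~\ref{Lemma1}.
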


Abusing notation (since $\MCB(G)$ may not be unique) we write $\MCB(G) \subseteq \LSC(G)$ and $\MCB(G) = \LSC(G)$, respectively, for the two statements in Lemma~\ref{lem:LSCMCB}.

By Lemma~\ref{lem:LSCMCB} it would suffice to compute the set of lex short cycles in $G$ for our purposes. 
This is the approach of Liu and Lu, who showed that for outerplanar graphs an implicit representation of $\LSC(G)$ can be computed in linear time. 
Before commenting further on our algorithm, we briefly note that in their paper, Liu and Lu assume that the shortest paths in their outerplanar graph are unique. They motivate this assumption by introducing a preprocessing routine that perturbs the input weights accordingly (cf. Lemma 1 in~\cite{LiuL10}). However, it is not obvious how to run this preprocessing step in linear time (as is misleadingly stated there), since the weight differences that are needed become exponentially small. Therefore, arithmetic operations on these numbers cannot be done in constant time; the unit-cost assumption was never meant to be stretched that far. However, it turns out that this assumption is not needed, as their algorithm relies only on the following fact, which we briefly note and prove here for the sake of completeness.

\begin{lemma}[implicitly in~\cite{LiuL10}]\label{Lemma1}
Let $G=(V,E)$ be a weighted outerplanar graph such that all the edges in 
$G$ are the lex shortest paths between their two endpoints .
A cycle $C$ in $G$ is a lex short
cycle if and only if $C$ is an induced cycle in $G$.
\end{lemma}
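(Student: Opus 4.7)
The plan is to prove both implications separately, with only the reverse direction needing outerplanarity.

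For the forward direction (lex short $\Rightarrow$ induced), I would argue by contradiction. If $C$ has a chord $\{u,v\}$ --- i.e., $u,v \in V(C)$ are non-adjacent in $C$ but $\{u,v\} \in E$ --- then by hypothesis $\lsp(u,v) = \{u,v\}$, a single edge. Since $C$ is a lex short cycle, it must contain $\lsp(u,v)$; but this edge is not in $C$, contradiction. Note that this direction uses only the edge-as-lex-shortest-path hypothesis and not outerplanarity.

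For the reverse direction (induced $\Rightarrow$ lex short), the plan is to extract a $K_{2,3}$-subdivision in $G$ from any witness that $C$ is not lex short, contradicting outerplanarity via Lemma~\ref{lem:K23minor} (applied to $G$, which is a partial 2-tree). Suppose for contradiction that there exist $u,v \in V(C)$ with $Q := \lsp(u,v) \not\subseteq C$. Since $C$ is induced and $Q \not\subseteq C$, at least one vertex of $Q$ must lie outside $V(C)$: otherwise every edge of $Q$ would connect two vertices of $V(C)$ by an edge of $G$, and inducedness would force them to lie in $C$. Pick such a vertex $q \in V(Q) \setminus V(C)$ and let $Q'$ be the maximal subpath of $Q$ containing $q$ whose internal vertices all lie outside $V(C)$; by maximality its endpoints $a,b$ lie in $V(C)$ (since $Q$ itself is bounded by $u,v \in V(C)$).

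Now $Q'$ is a subpath of a lex shortest path, hence itself a lex shortest path, so $Q' = \lsp(a,b)$. Since $Q'$ has length $\geq 2$ (it contains the internal vertex $q$), the hypothesis that every edge is a lex shortest path forces $\{a,b\} \notin E$, so $a,b$ are non-adjacent in $C$. The two arcs of $C$ between $a$ and $b$ then furnish internally vertex-disjoint $a$-$b$ paths $P_1',P_2'$, each of length $\geq 2$, and $Q'$ is a third $a$-$b$ path of length $\geq 2$ that is internally vertex-disjoint from both (since its internal vertices lie outside $V(C)$). Together $P_1',P_2',Q'$ form a $K_{2,3}$-subdivision with branch vertices $a$ and $b$, the desired contradiction.

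The main subtlety I expect is the extraction of the ``clean'' sub-excursion $Q'$: one must combine the inducedness of $C$ (to guarantee that $Q$ actually leaves $V(C)$, not merely $E(C)$) with the edge-as-lex-shortest-path hypothesis (to rule out $\{a,b\}$ being an edge, which would destroy the length-$\geq 2$ requirement for the three paths and thus the $K_{2,3}$-subdivision). Once the right $Q'$ is identified, the outerplanarity obstruction closes the argument.
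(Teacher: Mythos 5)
Your proposal is correct and follows essentially the same route as the paper: the forward direction uses a chord being the lex shortest path between its endpoints, and the reverse direction extracts an excursion of $\lsp(a,b)$ outside $V(C)$ between two vertices of $C$ that cannot be adjacent, yielding a $K_{2,3}$-subdivision that contradicts outerplanarity via Lemma~\ref{lem:K23minor}. Your choice of the maximal excursion around a vertex $q\notin V(C)$ and the use of uniqueness of lex shortest paths to exclude $\{a,b\}\in E$ is a slightly more explicit bookkeeping of the same argument the paper gives with the indices $p$ and $q$.
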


\begin{proof}
Let $C'$ be a non-induced cycle in $G$. Then there exists two 
non-neighboring vertices $x',y'$ in $C'$ with $\{x',y'\} \in E$. Since 
all edges in $G$ are the shortest paths between their endpoints, by definition, we have 
$\lsp(x',y')=\{x',y'\}$. Since $x'$ and $y'$ are non-neighboring in 
$C'$, the lex shortest path between these two vertices is not in $C'$. 
Hence, $C'$ cannot be lex short.

Let us now consider a cycle $C$ in $G$ that is not lex-short. To show 
that $C$  cannot be an induced cycle, we assume the contrapositive and 
show that this contradicts the outerplanarity of $G$.
Let $x$ and $y$ be two vertices in $C$ whose lex shortest path 
$\lsp(x,y)=(x=x_1, \ldots, x_t=y)$ is not contained in $C$.
That is, there at least one index $i$ such that the edge $\{x_i, 
x_{i+1}\}$ is not in $C$. Let $p$ be a minimal such index.
Since $C$ is an induced cycle and every edge in $G$ is the shortest path between its two endpoints, the vertex 
$x_{p+1}$ cannot lie on the cycle, i.e., $x_{p+1} \notin C$. Let $q$ be 
the smallest index greater than $p$ such that $x_q \in C$.
We observe that $x_p$ and $x_q$ cannot be neighboring each other in $C$, 
for otherwise $x_1, \ldots, x_p, x_q, \ldots, x_t$ would be a path 
between $x$ and $y$ of length strictly smaller than $\lsp(x,y)$.
This shows that there exists a $K_{2,3}$-subdivision in $G$.
$G$ can thus not be outerplanar by Lemma~\ref{lem:K23minor}.
\qed
\end{proof}

From this (and an implementation of a result in~\cite{FredericksonJ88}) is not too difficult to see that an implicit representation of the set of lex short cycles of an outerplanar graph can be obtained in linear time, cf.~\cite{LiuL10} for details.

\begin{theorem}[\cite{LiuL10}]
\label{thm:LiuL}
For every weighted outerplanar graph $G$ on $n$ vertices an $O(n)$-space representation of the set $\LSC(G)$ can be computed in $O(n)$ time. From this representation, any cycle $C \in \LSC(G)$ can be computed explicitly in time $O(\size(C))$. 
\end{theorem}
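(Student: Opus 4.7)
The plan is to reduce the computation of $\LSC(G)$ to enumerating the bounded faces of a pruned outerplanar subgraph. First I would form $G'$ by deleting from $G$ every edge $e=\{u,v\}$ with $\lsp_G(u,v)\neq\{e\}$. No such edge can appear in any $C\in\LSC(G)$, since otherwise $C$ would violate the lex-short condition at the pair $u,v$. Moreover, because every subpath of a lex shortest path is itself a lex shortest path, every edge used by $\lsp_G(x,y)$ is lex shortest between its own endpoints and hence survives in $G'$. Consequently $\lsp_G=\lsp_{G'}$ on all pairs, which gives $\LSC(G)=\LSC(G')$.

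To carry out the pruning in $O(n)$ total time, I would build the Frederickson--Jannardan data structure of~\cite{FredericksonJ88}, which after $O(n)$ preprocessing answers shortest-distance queries in an outerplanar graph in $O(1)$ time. For each edge $e=\{u,v\}$, a constant number of queries suffices to decide whether $e=\lsp_G(u,v)$: compare $w(e)$ with the $uv$-distance, and in the case of equality resolve the tie-breaking in Definition~\ref{def:lex} by augmenting the oracle with the edge count and the minimum vertex label along the canonical lex shortest path for each representative pair. Because the Frederickson--Jannardan construction proceeds by divide and conquer on a balanced separator decomposition of the outerplanar graph, these secondary statistics can be combined by constant-time merges and stored in total $O(n)$ space.

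Once $G'$ is available, Lemma~\ref{Lemma1} identifies $\LSC(G')$ with the induced cycles of $G'$. Because $G'$ is $2$-connected and outerplanar, its induced cycles coincide with the bounded faces of any outerplanar embedding: a chord of a bounded face would either contradict outerplanarity or witness that the face is not induced. The bounded faces can be listed in $O(n)$ time from a linear-time outerplanar embedding by traversing the weak dual tree. The implicit representation then consists of $G'$, its embedding, and this face list; each $C\in\LSC(G)$ is output explicitly in $O(\size(C))$ time by walking around the boundary of its face.

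The hardest part is the tie-breaking in Definition~\ref{def:lex}. A raw distance oracle only detects strictly shorter alternatives to an edge, whereas deciding whether $e$ is the \emph{lex} shortest $uv$-path also requires comparing edge counts and minimum vertex labels across all $uv$-paths of weight $w(e)$. The delicate point is to pipe this information through the Frederickson--Jannardan machinery without breaking the $O(n)$ preprocessing / $O(1)$ query budget; the rigidity imposed by outerplanarity (via Lemma~\ref{lem:K23minor}, which forbids $K_{2,3}$-subdivisions) is what keeps the number of competing $uv$-routes bounded by a constant and makes the augmentation possible.
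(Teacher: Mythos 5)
Your opening reduction is where the proof breaks. The edges you delete are exactly the long edges $L$: in a simple graph, an edge $e=\{u,v\}$ that is tight (i.e.\ $w(e)$ equals the $u$--$v$ distance) is automatically $\lsp(u,v)$, since any competing path of equal weight has at least two edges and loses the second tie-break of Definition~\ref{def:lex}. But a long edge \emph{does} lie on a lex short cycle of $G$, namely $C=\{e\}\cup\lsp(u,v)$: every vertex of $C$ lies on $\lsp(u,v)$, and for any two such vertices the lex shortest path between them is a subpath of $\lsp(u,v)$ and hence contained in $C$, so $C$ does not ``violate the lex-short condition at the pair $u,v$'' --- it contains both $e$ and $\lsp(u,v)$. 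Indeed, the proof of Lemma~\ref{lem:tight} in this paper states precisely that $\LSC(G)=\LSC(G\setminus L)\cup\{\{e\}\cup\lsp(u,v)\mid e=\{u,v\}\in L\}$. Consequently $\LSC(G)\neq\LSC(G')$ whenever $L\neq\emptyset$; your construction outputs only $(m-|L|)-n+1$ cycles instead of $m-n+1$, so it is not even a cycle basis of $G$, let alone the set $\LSC(G)$ that Theorem~\ref{thm:LiuL} asks for. Repairing this is not mere bookkeeping: you would have to add the cycle $\{e\}\cup\lsp(u,v)$ for each long edge, and reporting such a cycle explicitly in $O(\size(C))$ time requires extracting the \emph{lex} shortest path rather than an arbitrary shortest path --- which reintroduces exactly the tie-breaking difficulty; the paper sidesteps it by computing an MCB instead of $\LSC$ and allowing arbitrary shortest paths via Lemma~\ref{lem:tight} and Lemma~\ref{lem:DistanceOracleBoost}.

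Two further points. The oracle augmentation you single out as ``the hardest part'' is unnecessary for the pruning you actually perform: tightness already implies lex-shortestness for an edge, so a plain distance oracle suffices; moreover your justification (constantly many competing $u$--$v$ routes by Lemma~\ref{lem:K23minor}) is unsound, because equal-weight $u$--$v$ paths need not be internally disjoint and can be numerous, so the claimed $O(1)$-mergeable statistics are not established. Also, neither $G$ nor $G'$ is assumed $2$-connected, so the identification of induced cycles with bounded faces must be applied blockwise. Finally, be aware that the paper does not reprove this theorem: it cites Liu and Lu and contributes only Lemma~\ref{Lemma1} (plus the remark about implementing~\cite{FredericksonJ88}) to replace their uniqueness/perturbation assumption, whereas your attempt at a self-contained argument fails at the long-edge step above.
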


As mentioned above, if we could compute for every pair of vertices $u$ and $v$ in $G$ the lex shortest path between $u$ and $v$, we could---like Liu and Lu---resort to computing $\LSC(G)$. 
However, since we do not know currently how to perturb the weights in linear time such that all lex shortest paths in $G$ are also the unique shortest paths between its two endpoints, it is somewhat more challenging to achieve a linear running time for our generalization. Our algorithm will therefore not necessarily compute the set of lex short cycles. Instead, as we shall describe in the next section, we will do a decomposition of the graph $G$ into outerplanar graphs using arbitrary shortest paths instead of lex shortest paths. 
The result of Liu and Lu will still be an essential step in our algorithm, as it allows us to handle the outerplanar graphs that result from our decomposition using Theorem~\ref{thm:LiuL}.

\section{High-Level Overview of Our Algorithm and Technical Details}
\label{subsec:highlevel}

We first describe the high-level idea of our algorithm; most proofs and the algorithmic details are presented in the subsequent sections. From now on we assume that $G$ is a 2-connected weighted partial 2-tree.

\subsection{Removal of Long Edges}
\label{sec:preprocessingSteps}

We call an edge $\{u,v\}$ \emph{tight} if it is a shortest path between $u$ and $v$, and we call it \emph{long} otherwise. By the observation made in the following lemma, we will treat the long edges in $G$ separately. In fact, the lemma allows us to ignore the set $L$ of all long edges in the main routine of our algorithm. For two vertices $u$ and $v$ in $G$, let $\short(u,v)$ be an arbitrary shortest path between $u$ and $v$.

\begin{lemma}
\label{lem:tight}
Let $G=(V,E)$ be a weighted partial 2-tree and let $L$ be the set of long edges in $G$. 
Then $\MCB(G) = \MCB(G\setminus L) \cup \{\{e\} \cup \short(u,v) \mid e=\{u,v\} \in L\}$.
\end{lemma}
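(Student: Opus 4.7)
The plan is to leverage Lemma~\ref{lem:LSCMCB}, which guarantees $\MCB(G)=\LSC(G)$ on partial 2-trees, and then perform careful bookkeeping on how long edges interact with lex short cycles. The first step is to observe that no shortest path in $G$ can traverse a long edge $e=\{u,v\}$: replacing $e$ by $\short(u,v)$ inside the path yields a strictly lighter walk, from which one extracts a strictly lighter path with the same endpoints, contradicting optimality. The same argument rules out long edges in lex shortest paths. As a consequence, $G\setminus L$ is still connected (every pair of vertices is joined by a shortest path that already lives in $G\setminus L$) and $\lsp_G(x,y)=\lsp_{G\setminus L}(x,y)$ for every pair $x,y\in V$.

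Next I would analyze $\LSC(G)$ by a case split on whether a lex short cycle $C$ contains a long edge. If $C$ uses no long edge, then $C$ is a cycle in $G\setminus L$, and the identification of lex shortest paths just established shows $C\in\LSC(G\setminus L)$; conversely, every $C\in\LSC(G\setminus L)$ lifts to a lex short cycle of $G$ for the same reason. If $C$ contains a long edge $e=\{u,v\}$, then $C$ must also contain $\lsp(u,v)$; but $e$ and $\lsp(u,v)$ are already two internally vertex-disjoint $u$--$v$ paths, and since every vertex of $C$ has degree two this forces $C=\{e\}\cup\lsp(u,v)$ and precludes a second long edge on $C$ (all edges of a lex shortest path are tight, being themselves lex shortest paths between their endpoints). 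Conversely, for every $e=\{u,v\}\in L$ the cycle $C_e:=\{e\}\cup\lsp(u,v)$ is genuinely lex short, because every pair of its vertices lies on $\lsp(u,v)$ and a subpath of a lex shortest path is again a lex shortest path. Altogether,
\[
\LSC(G) \;=\; \LSC(G\setminus L)\;\cup\;\bigl\{\,\{e\}\cup\lsp(u,v):e=\{u,v\}\in L\,\bigr\}.
\]

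To finish, I would verify that $\B':=\MCB(G\setminus L)\cup\{\{e\}\cup\short(u,v):e=\{u,v\}\in L\}$ is a minimum cycle basis of $G$. Its cardinality is $(m-|L|-n+1)+|L|=m-n+1$, matching the dimension of the cycle space of $G$. For independence over $\GF(2)$, each appended cycle contains a distinct long edge that appears in no other member of $\B'$ (the cycles of $\MCB(G\setminus L)$ avoid $L$), so the whole collection is linearly independent and hence a cycle basis. Finally, since $w(\short(u,v))=w(\lsp(u,v))$ for every $e=\{u,v\}\in L$ and since Lemma~\ref{lem:LSCMCB} applied to the partial 2-tree $G\setminus L$ gives $w(\MCB(G\setminus L))=w(\LSC(G\setminus L))$,
\[
w(\B') \;=\; w(\LSC(G\setminus L)) + \!\!\sum_{e=\{u,v\}\in L}\!\!\bigl(w(e)+w(\lsp(u,v))\bigr) \;=\; w(\LSC(G)) \;=\; w(\MCB(G)),
\]
so $\B'$ attains the minimum weight.

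The delicate step, and the one I expect to require the most care, is the transition from lex shortest paths (mandated by Lemma~\ref{lem:LSCMCB}) to the arbitrary shortest paths $\short(u,v)$ used in the statement. The argument above handles this gracefully: the long edge $e$ alone witnesses independence, so the particular choice of $\short(u,v)$ plays no role for the basis properties, while the weight accounting only ever uses the equality $w(\short(u,v))=w(\lsp(u,v))$. One also has to ensure $\MCB(G\setminus L)$ is well-defined, which is precisely what the opening observation (no shortest path uses a long edge, hence $G\setminus L$ is connected) delivers.
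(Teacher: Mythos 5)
Your proof is correct and takes essentially the same route as the paper's: show that the augmented family is an independent set of $m-n+1$ cycles, hence a basis, and then establish minimality by comparing its weight with that of $\LSC(G)$ via Lemma~\ref{lem:LSCMCB}. Your explicit case analysis proving $\LSC(G)=\LSC(G\setminus L)\cup\{\{e\}\cup\lsp(u,v)\mid e=\{u,v\}\in L\}$ (and the observation that $G\setminus L$ stays connected) merely fills in details that the paper's proof asserts without elaboration.
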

 
\begin{proof}
The independence of $\M:=\MCB(G\setminus L) \cup \{ \{e\} \cup \short(u,v) \mid e=\{u,v\} \in L\}$ follows from the independence of $\MCB(G\setminus L)$ and the fact that $e$ is contained in $\M$ only in the one cycle $\{e\} \cup \short(u,v)$. 
Since $\M$ contains exactly $m-n+1$ cycles, it is (cf. the comment in Section~\ref{sec:MCBs}) also a \emph{maximal} set of independent cycles; i.e., a cycle basis.
We verify that the total weight of the cycles in $\M$ is minimal: According to Lemma~\ref{lem:LSCMCB}, $\LSC(G \setminus L)$ is a minimum cycle basis of $G \setminus L$. Thus, the weight of the cycles in $\MCB(G\setminus L)$ equals the weight of the cycles in $\LSC(G \setminus L)$. In addition, the weight of each cycle $\{e\} \cup \short(u,v)$ equals the weight of $\{e\} \cup \lsp(u,v)$.  It follows directly from the two arguments given above that $\LSC(G)=\LSC(G\setminus L) \cup \{ \{e'\} \cup \lsp(u',v') \mid e'=\{u',v'\} \in L\}$. Since $\LSC(G)$ is a minimum cycle basis of $G$, $\M$ is of minimum weight. 
\qed
\end{proof}

According to Lemma~\ref{lem:tight}, we can ignore long edges, but need to ensure that we add a short cycle containing $e$ for every edge $e \in L$ to the minimum cycle basis at the very end of the main routine. That long edges can be identified and removed from $G$ in $O(n)$ time using a suitable data structure will be shown in Lemma~\ref{lem:DistanceOracleBoost}. The removal of the long edges from $G$ does therefore not change the linear runtime of our algorithm. 

\subsection{High-Level Overview of the Main Algorithm}
\label{sec:highlevel}

In a first step of Algorithm~\ref{alg:main} we remove the set of long edges $L$ from $G$ (they will be taken care of later on using Lemma~\ref{lem:tight}). 
The key approach for our algorithm is then to iteratively decompose the graph $G \setminus L$ into outerplanar graphs $\tilde{G}_1, \ldots, \tilde{G}_r$. 
To these graphs we apply the linear time algorithm of Liu and Lu (Theorem~\ref{thm:LiuL}). 
Intuitively, the decomposition is done as follows.

When $G \setminus L$ is not outerplanar, then there exists a $K_{2,3}$-subdivision in $G \setminus L$ with branch vertices $u$ and $v$ such that 
(i) $\{u,v\}$ is a minimum vertex separator of $G \setminus L$ and 
(ii) the removal of $\{u,v\}$ disconnects $G \setminus L$ into at least three connected components $H_1, \ldots, H_k$ (cf. Lemma~\ref{lem:K23minor}).
We distinguish two cases. 
If $\{u,v\} \in E$, we set $G_h:=(G \setminus L)[V(H_h) \cup \{u,v\}]$, $1 \leq h \leq k$. 
Otherwise, let 
$\short(u,v)$ be an arbitrary shortest path between $u$ and $v$ in $G \setminus L$ and let 
$j(u,v) \in \{1, 2, \ldots, k\}$ such that 
$\short(u,v) \in (G \setminus L)[V(H_{j(u,v)}) \cup \{u,v\}]$. 
 
We set 
$G_{j(u,v)}:=(G \setminus L)[V(H_{j(u,v)}) \cup \{u,v\}]$, and for all 
$1 \leq h \neq j(u,v) \leq k$ we set 
$G_h:=(G \setminus L)[V(H_h) \cup \{u,v\}] \cup \green(u,v)$, where $\green(u,v)$ denotes a new ``colored'' (i.e., marked) edge $\{u,v\}$ that serves as a placeholder for the shortest path $\short(u,v)$ between $u$ and $v$ (which, by definition, is not contained in $G_h$). 
The weight $w(\green(u,v))$ assigned to this new edge is therefore set to the weight $w(\short(u,v))$ of the shortest path between $u$ and $v$. Clearly, $w(\green(u,v))=w(\lsp(u,v))$.
Let the operation $\de(G \setminus L,u,v)$ decompose $G \setminus L$ into $G_1, \ldots, G_k$ with respect to the vertices $u,v$. We call each $G_h$, $1 \leq h \leq k$, a \emph{part} of $\de(G \setminus L,u,v)$.

We now iteratively decompose the graphs $G_1, \ldots, G_k$ as described above until we are left with graphs $\tilde{G}_1, \ldots, \tilde{G}_r$ that do not contain any $K_{2,3}$-subdivision, i.e., with outerplanar graphs according to Lemma~\ref{lem:K23minor}. Since all the edges in $\tilde{G}_h$ are tight, the set of lex short cycles in $\tilde{G}_h$ equals the boundaries of its internal faces. Extracting the internal faces of $\tilde{G}_h$ can be done in linear time, cf.~\cite{LiuL10} or the comments before and after Lemma~\ref{Lemma1}. 
We will show in Theorem~\ref{thm:mapping1} that the (disjoint) union of $\ex(\LSC(\tilde{G}_1)), \ldots, \ex(\LSC(\tilde{G}_r))$ forms a minimum cycle basis, where, naturally, $\ex(\LSC(G_h))$ replaces the marked edges $\green(u,v)$ in every cycle by the shortest path $\short(u,v)$.

Finally, we add to this minimum cycle basis the cycles $e \cup \short(u,v)$ for all long edges $e=\{u,v\} \in L$ in $G$, where again $\short(u,v)$ is an arbitrary shortest path between $u$ and $v$. 
This can be done either implicitly by storing $u$, $v$, and the graph $G$ or explicitly by computing the shortest paths with Lemma~\ref{lem:DistanceOracleBoost}.

An important part of the algorithm is to find a data structure that allows to identify all $K_{2,3}$-subdivisions and to do the respective decomposition in linear time. 
To this end, we define \emph{suitable tree decompositions}. 

\subsection{Suitable Tree Decompositions}
\label{sec:treedecomp}

We define suitable tree decompositions and we show how they help in efficiently computing our decomposition. To ease readability, we consider \emph{rooted} tree decompositions. 
We direct all links in the tree decomposition from the root to the leaves, that is, for a link $(X,Y)$ in the decomposition, $X$ has smaller distance to the root than $Y$. Bag $X$ will then be referred to as the \emph{father}, and bag $Y$ is referred to as the \emph{child}. 
All links $\ell =(X, Y)$ in the tree decomposition are \emph{labeled} by the intersection $X \cap Y$ of its two endpoints.

\begin{definition}[Suitable Tree Decomposition]
\label{def:suitable}
An optimal rooted tree decomposition of $G$ is \emph{suitable} if it satisfies the following properties:
\begin{enumerate}
	\item The size of every bag $X_i$ is $3$ and every two adjacent bags $X_i, X_j$ in $T$ differ by exactly one vertex; i.e., $|X_i \cap X_j|=2$ (this property is called \emph{smooth} in~\cite{Bodlaender96}).
	\item Any two links with the same label have a common father in $T$; i.e., for any two links $(X_1,Y_1)$ and $(X_2,Y_2)$ with $X_1 \cap Y_1 = X_2 \cap Y_2$ it holds that $X_1 = X_2$.
\end{enumerate}
\end{definition}
Observe that for any internal bag in $T$, the number of children could be arbitrary, but there are at most three different labels associated with the links to its children.

Our algorithm will perform all computations in a suitable tree decomposition of the tight induced subgraph of $G$. It is therefore important that such a tree decomposition can be computed in linear time.

\begin{lemma}
\label{lem:makesuitable}
Given a partial 2-tree $G$, a suitable tree decomposition can be computed in linear time and has linear space.
\end{lemma}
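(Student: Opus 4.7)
The plan is to build a suitable tree decomposition of $G$ in three phases. First, I would invoke Bodlaender's linear-time algorithm~\cite{Bodlaender96}, which on the bounded-treewidth graph $G$ produces in linear time an optimal tree decomposition $(\{X_1,\dots,X_r\}, T)$ of width~$2$. Second, I would apply the standard linear-time ``smoothing'' postprocessing described in~\cite{Bodlaender96}: iteratively merge two adjacent bags whenever one is contained in the other, and subdivide any remaining link whose endpoints share fewer than two vertices by inserting intermediate bags of size three. After this step every bag has size exactly three and any two adjacent bags differ by exactly one vertex, so property~(1) of Definition~\ref{def:suitable} holds. I then root the tree at an arbitrary bag, so every non-root bag has a unique father.

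The nontrivial step is enforcing property~(2). In the rooted smooth decomposition every link $(X,Y)$ carries a $2$-element label, say $\{u,v\}$, and by the subtree property the bags containing both $u$ and $v$ form a subtree $T(u,v)$ of $T$ whose internal links all carry label $\{u,v\}$. Property~(2) is equivalent to the statement that for every such label $\{u,v\}$ the subtree $T(u,v)$ is a ``star'' consisting of its topmost bag $R(u,v)$ together with a set of children of $R(u,v)$. I would enforce this by iterating the following local reparenting move: whenever a bag $Y\in T(u,v)$ has parent $X \in T(u,v)\setminus \{R(u,v)\}$, I reattach $Y$ as a direct child of $X$'s father $P$. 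Since $X \neq R(u,v)$, the bags on the path from $R(u,v)$ to $X$ all contain $\{u,v\}$ by the subtree property, so $P$ also lies in $T(u,v)$ and the new link $(P,Y)$ still carries label $\{u,v\}$. Moreover, $|X|=3$ and $|X\cap Y|=2$ imply that the unique vertex $z\in Y\setminus\{u,v\}$ does not lie in $X$; applying the subtree property for $z$ in the original decomposition then shows that every bag containing $z$ already lies in the subtree rooted at $Y$, so moving $Y$ upward preserves the subtree property for $z$. The subtree properties for $u$ and $v$ are trivially preserved, all bags retain size three, and no new link labels are introduced.

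For the running time I would use the observation that in a smooth width-$2$ decomposition each link belongs to exactly one subtree $T(u,v)$, and each bag to at most $\binom{3}{2}=3$ such subtrees, so $\sum_{\{u,v\}} |T(u,v)| = O(r)=O(n)$. A single traversal of $T$ that, for each newly encountered link label $\{u,v\}$, identifies $R(u,v)$ and reparents the remaining bags of $T(u,v)$ as direct children of $R(u,v)$ then runs in total time $O(n)$ and uses $O(n)$ space. The main obstacle I anticipate is verifying that the reparenting moves for one label $\{u,v\}$ cannot destroy property~(2) already enforced for another label $\{u',v'\}$. This should follow from the two properties of the move highlighted above, namely that it never creates a link label that was not present before and only shortens, never lengthens, ancestor chains among bags with a common label; nevertheless this interference-freeness is the point where the argument requires the most care.
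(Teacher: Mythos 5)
Your proposal is correct and follows essentially the same route as the paper: Bodlaender's linear-time smooth tree decomposition, followed by the same local reparenting move (reattaching the subtree rooted at a bag to its grandparent whenever two consecutive links share a label), which the paper applies in a single top-down traversal rather than label by label. The interference between labels that you flag as delicate is harmless for exactly the reason you sketch: each move replaces only the single link $(X,Y)$ by $(P,Y)$ carrying the same label $\{u,v\}$, so the links with any other label and their father bags are untouched.
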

\begin{proof}
The number of bags of a smooth tree decomposition of a partial 2-tree $G$ is $n-2$ (see Lemma~2.5 in~\cite{Bodlaender96}). As every bag contains exactly $3$ vertices, this gives a linear space representation of the tree decomposition.
A rooted tree decomposition $(\{X_1, \ldots, X_r\}, T)$ of $G$ that is additionally smooth can be computed using Bodlaender's algorithm~\cite{Bodlaender96}. We make $T$ suitable (i.e., we add Property~\ref{def:suitable}.2) as follows. By the subtree-property of tree decompositions, it suffices to give a smooth tree decomposition $T'$ in which no two links $(A,B)$ and $(B,C)$ have the same label.

Traverse $T$ in any order that starts on the root and in which father bags precede their children. Whenever a bag $B$ with father $A$ and child $C$ is visited such that the links $(A,B)$ and $(B,C)$ have the same labels, we modify $T$ by attaching the subtree of $T$ that is rooted on $C$ to $A$; i.e., after the modification $C$ is a sibling of $B$. This causes $B$ and $C$ to have the same father. It is straight-forward to see that the modified tree is still a smooth tree decomposition. After the traversal is finished, we have a suitable tree decomposition. Clearly, all modifications can be computed in constant time per step, leading to a total running time of $O(n)$.
\qed
\end{proof}

One of the key observations of our algorithm is the fact that for all $K_{2,3}$-subdivisions the two branch vertices must be contained in at least \emph{three common bags} of a suitable tree decomposition. This is shown using the following results (cf.\ Corollary~\ref{cor:graphTreeDecomp2}).

\begin{lemma}[Lemma~12.3.4 in~\cite{Diestel2010}]
\label{lem:Diestel}
Let $W \subseteq V(G)$ and let $T$ be a tree decomposition of $G$. Then $T$ contains either a bag that contains $W$ or a link $(X_1,X_2)$ such that two vertices of $W$ are separated by $X_1 \cap X_2$ in $G$.
\end{lemma}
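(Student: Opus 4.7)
My plan is to recast the statement in terms of the family of subtrees $\{T_w : w \in W\}$, where $T_w$ is the subtree of $T$ induced by those bags containing $w$ (well-defined and connected by the subtree property). The central tool is the \emph{Helly property} for subtrees of a tree: any pairwise-intersecting family of subtrees shares a common vertex. Applied to our family, this yields a clean dichotomy that matches the two alternatives in the lemma.

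First I would handle the case in which every two subtrees $T_u, T_v$ (with $u, v \in W$) intersect. By Helly, there is a bag in $\bigcap_{w \in W} T_w$, and this bag contains $W$, giving the first alternative. In the remaining case, pick $u, v \in W$ with $T_u \cap T_v = \emptyset$. The unique $T_u$--$T_v$ path in $T$ must contain a link $(X_1, X_2)$ whose removal places $T_u$ on the $X_1$-side and $T_v$ on the $X_2$-side. The goal is then to show that $X_1 \cap X_2$ separates $u$ from $v$ in $G$. To do so, I would take an arbitrary $u$--$v$ path $u = v_0, v_1, \ldots, v_k = v$ in $G$, observe that each pair $v_i, v_{i+1}$ shares a bag (by property~2 of a tree decomposition) so that $T_{v_i} \cap T_{v_{i+1}} \neq \emptyset$, and argue that since $T_{v_0}$ lies on the $X_1$-side and $T_{v_k}$ on the $X_2$-side, some $T_{v_j}$ must meet both sides of $(X_1, X_2)$; otherwise two consecutive subtrees on opposite sides would have to share a bag, which is impossible. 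By the subtree property, any subtree containing bags on both components of $T - (X_1,X_2)$ must contain the link itself, and hence $v_j \in X_1 \cap X_2$, so the path passes through the separator.

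The principal technical point, and essentially the only place where care is needed, is this last implication: it is not immediate that a subtree straddling the deleted link corresponds to a vertex of $G$ lying in both endpoints $X_1$ and $X_2$. I would therefore state it as a brief subsidiary observation (a direct consequence of the subtree property applied to the tree path from a bag on one side to a bag on the other) so that the two main branches of the dichotomy remain clean. Everything else, including the existence of the separating link and the walk argument that crosses it, is straightforward tree bookkeeping once Helly has been invoked.
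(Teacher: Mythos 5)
Your proposal is correct, but note that the paper does not prove this lemma at all: it imports it verbatim from Diestel's book (Lemma~12.3.4 there), so there is no in-paper argument to match. Your Helly-based route is a valid, self-contained alternative to Diestel's proof (which proceeds by orienting each tree link toward the side that still covers $W$ and extracting a sink node, rather than via Helly): the pairwise-intersecting case yields a common bag containing $W$, and in the disjoint case any link on the connecting $T_u$--$T_v$ path has $T_u$ and $T_v$ strictly on opposite sides, after which your walk argument, together with the subsidiary observation that a subtree meeting both sides of a deleted link must contain both its endpoints, shows every $u$--$v$ path in $G$ meets $X_1 \cap X_2$. The only micro-steps you leave implicit are harmless and easily supplied: that any link on the shortest connecting path indeed separates the two subtrees (its far endpoint cannot lie in $T_u$, and symmetrically), and that each $T_w$ is nonempty by property~1 of a tree decomposition. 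As a bonus, your construction automatically gives the slightly stronger form in Diestel's original statement---that the two separated vertices lie \emph{outside} $X_1 \cap X_2$---since $u \in X_1 \cap X_2$ would force both $X_1$ and $X_2$ into $T_u$, contradicting that $T_u$ lies entirely on one side of the link; it would be worth making that one line explicit, since the paper's phrasing ``separated by $X_1 \cap X_2$'' tacitly relies on it.
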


\begin{lemma}[$K_{2,3}$-Subdivisions in Partial 2-Trees]
\label{lem:K23minorInBag}
Let $u$ and $v$ be the branch vertices of a $K_{2,3}$-subdivision $H$ in a partial 2-tree~$G$. For every optimal tree decomposition $T$ of $G$ without bag duplicates (in particular for suitable tree decompositions), $\{u,v\}$ is contained in at least one bag of $T$.
\end{lemma}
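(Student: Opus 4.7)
The plan is to apply Lemma~12.3.4 of Diestel with $W=\{u,v\}$ and derive a contradiction via Menger's theorem.

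First I would assume for contradiction that no bag of $T$ contains both $u$ and $v$. Lemma~\ref{lem:Diestel} then furnishes a link $(X_1,X_2)$ of $T$ such that $X_1\cap X_2$ separates $u$ from $v$ in $G$. In particular, $u,v\notin X_1\cap X_2$, since a separator of two vertices cannot contain them.

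Next I would bound $|X_1\cap X_2|$ from above. Since $G$ is a partial 2-tree and $T$ is optimal, the treewidth of $T$ is $2$, so every bag has size at most $3$. The bags $X_1$ and $X_2$ are distinct (otherwise they would be duplicates in the sense forbidden by the hypothesis; note that adjacent bags are distinct vertices of the underlying tree, but if $X_1=X_2$ as vertex sets then we have a duplicate). Consequently $|X_1\cap X_2|\le 2$.

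Finally I would bound $|X_1\cap X_2|$ from below using the $K_{2,3}$-subdivision. The three paths of this subdivision joining its two branch vertices $u$ and $v$ are internally vertex-disjoint by definition of a subdivision, so $G$ contains three internally vertex-disjoint $u$--$v$-paths. By Menger's theorem, any $u$--$v$ vertex separator in $G$ has size at least $3$. Applied to the separator $X_1\cap X_2$, this gives $|X_1\cap X_2|\ge 3$, contradicting the upper bound above. Hence some bag of $T$ contains $\{u,v\}$, as claimed.

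The only subtle point is making sure the hypothesis ``no bag duplicates'' is used to rule out $X_1=X_2$ (as vertex sets): without it, Diestel's lemma would still apply, but adjacent bags in $T$ could coincide as sets of size $3$, and then their intersection could have size $3$. Since a suitable tree decomposition is smooth (adjacent bags differ by exactly one vertex, so $|X_1\cap X_2|=2$), suitable decompositions immediately satisfy the no-duplicates hypothesis, and the argument applies. No further technicalities are expected.
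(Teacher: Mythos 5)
Your proof is correct and follows essentially the same route as the paper: apply Lemma~\ref{lem:Diestel} with $W=\{u,v\}$, bound $|X_1\cap X_2|\le 2$ via optimality and the no-duplicates hypothesis, and derive the contradiction from the fact that separating the branch vertices of a $K_{2,3}$-subdivision requires at least three vertices. Your explicit appeal to the three internally disjoint $u$--$v$ paths (only the easy direction of Menger is needed) merely spells out a step the paper leaves implicit.
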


\begin{proof}
We show the claim by applying Lemma~\ref{lem:Diestel} with $W = \{u,v\}$. If $u$ and $v$ are not contained in a bag, there must be a link $(X_1,X_2)$ in $T$ such that $u$ and $v$ are separated by $X_1 \cap X_2$ in $G$. Since $H$ is a $K_{2,3}$-subdivision, at least three vertices need to be removed in order to separate $u$ and $v$. Since $T$ is optimal, $X_1 \cap X_2$ can only contain more than two vertices when $X_1$ and $X_2$ consist of the same three vertices. This contradicts that there are no bag duplicates in $T$.
\qed\end{proof}

Now we can prove the desired Corollary~\ref{cor:graphTreeDecomp2} with the following lemma.

\begin{lemma}
\label{obs:graphTreeDecomp1}
Let $T$ be a suitable tree decomposition of a $2$-connected partial 2-tree $G$ and let $u,v \in V$. Then $T$ contains at least three bags that contain both $u$ and $v$ if and only if $G-\{u,v\}$ has at least three connected components. If $T$ contains at least $k \geq 3$ such bags, the number of connected components in $G-\{u,v\}$ is exactly $k$; in particular, $G$ contains then a $K_{2,k}$-subdivision.
\end{lemma}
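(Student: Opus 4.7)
Write $S:=\{X\in V(T):\{u,v\}\subseteq X\}$. The plan is to pin down the shape of $S$ inside $T$ using the two axioms of a suitable decomposition, and then to identify the components of $G-\{u,v\}$ with the third vertices of the bags in $S$. Because bags have size $3$ and adjacent bags intersect in exactly two vertices by smoothness, every link of $T$ that lies inside $S$ carries the label $\{u,v\}$; Property~2 of Definition~\ref{def:suitable} then forces all such links to share one common father. Consequently $S$ is a star centered at some bag $X_0=\{u,v,w_0\}$ with $|S|-1$ leaves $X_i=\{u,v,w_i\}$, and by smoothness the third vertices $w_0,w_1,\dots,w_{|S|-1}$ are pairwise distinct.

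Assume $|S|=k\geq 3$. I would first show that $w_0,\dots,w_{k-1}$ lie in pairwise distinct components of $G-\{u,v\}$. The subtree $T_{w_i}$ of bags containing $w_i$ is connected and contains $X_i$; for $i\geq 1$ it cannot escape $X_i$'s subtree in $T$, since the only exit is via the link $(X_0,X_i)$ and $w_i\notin X_0$. An analogous argument confines $T_{w_0}$ to $X_0$ together with the non-$S$ branches of $T$ at $X_0$. Hence for $i\neq j$ the tree-path between $T_{w_i}$ and $T_{w_j}$ uses some link labelled $\{u,v\}$, and the standard tree-decomposition separator property then forces every $G$-path between $w_i$ and $w_j$ to pass through $u$ or $v$; so $w_i$ and $w_j$ lie in different components of $G-\{u,v\}$.

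For the matching upper bound, consider any $w\notin\{u,v,w_0,\dots,w_{k-1}\}$. Its subtree $T_w$ is disjoint from $S$ and therefore lies in a single connected component $T^{j}$ of the forest $T\setminus S$, attached to a unique bag $X_{i(j)}\in S$. The key subclaim is that the attaching link is not labelled $\{u,v\}$: otherwise Property~2, combined with the $\geq 2$ links already present inside $S$ (all with father $X_0$), would force the first bag of $T^{j}$ to contain $\{u,v\}$, contradicting its disjointness from $S$. Therefore the attaching label contains $w_{i(j)}$ together with exactly one of $\{u,v\}$, and it forms a $2$-separator in $G$ that isolates the interior vertices of $T^{j}$. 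Using $2$-connectivity of $G$ (after deleting the vertex of $\{u,v\}$ in this label the graph stays connected, and any walk from $w$ to a vertex outside the piece must first reach $w_{i(j)}$ through the inside of the separator) I can route $w$ to $w_{i(j)}$ inside $V(T^{j})\setminus\{u,v\}$, yielding a path in $G-\{u,v\}$. This gives exactly $k$ components. The reverse ``only if'' direction of the biconditional is covered by the same machinery in the cases $|S|\in\{0,1,2\}$: the $|S|\in\{1,2\}$ cases again produce at most $|S|$ components, and in the $|S|=0$ case the disjoint subtrees $T_u$ and $T_v$ deliver a $2$-set separating $u$ from $v$ in $G$, so Menger's theorem caps the number of components at $2$. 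The $K_{2,k}$-subdivision is then immediate: by $2$-connectivity $u$ and $v$ each have a neighbor in every one of the $k$ components, so one obtains $k$ internally vertex-disjoint $u$-$v$ paths, one through each component. The delicate step is the rerouting of $w$ to $w_{i(j)}$ inside the piece $T^{j}$, where $2$-connectivity of $G$ must be applied together with the ``label is not $\{u,v\}$'' consequence of suitability.
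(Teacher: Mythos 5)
Your proof is correct, but it reaches the harder half of the statement by a genuinely different route than the paper. Both arguments start the same way: suitability forces the set $S$ of bags containing $\{u,v\}$ to be a star (the paper phrases this as $Y_2,\dots,Y_k$ being children of $Y_1$), and the lower bound ``$k$ bags $\Rightarrow$ at least $k$ components'' is in both cases the separator property of the links labelled $\{u,v\}$ (the paper argues no edge joins $V_i-V_j$ to $V_j-V_i$; you argue the third vertices $w_0,\dots,w_{k-1}$ are pairwise separated by $\{u,v\}$). The difference is in the converse/exactness part: the paper builds a $K_{2,\ell}$-subdivision from the components, invokes Lemma~\ref{lem:K23minorInBag} (via Lemma~\ref{lem:Diestel}) to get one $\{u,v\}$-bag, and then rules out two non-branch vertices sharing a subtree $T_i$ by a least-common-ancestor argument about four independent paths meeting the third vertex $c$ of the unique $\{u,v\}$-bag of $T_i$; exactness then follows by playing the two directions against each other. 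You instead prove the matching upper bound directly: each component of $T\setminus S$ attaches to $S$ by a single link whose label cannot be $\{u,v\}$ (its outer endpoint is not in $S$), so the label is $\{u,w_i\}$ or $\{v,w_i\}$, and a first-exit routing argument in $G$ minus the one deleted label vertex (using $2$-connectivity) sends every vertex of the piece to $w_i$ inside $G-\{u,v\}$; the $|S|\in\{1,2\}$ cases of the biconditional come for free from the same machinery, and $|S|=0$ is handled by Lemma~\ref{lem:Diestel} plus Menger, where the paper reuses Lemma~\ref{lem:K23minorInBag}. Your route avoids the delicate LCA/independent-paths contradiction and gives the slightly stronger structural fact that the components are exactly the pieces reachable from the $w_i$; the paper's route has the advantage of leaning only on machinery it has already set up for the subdivision. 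For a full write-up you should record two routine facts you use implicitly: $S$ is connected in $T$ (it is the intersection of the subtrees of bags containing $u$ and containing $v$, which also gives the distinctness of leaf third vertices, rather than smoothness alone), and the vertex of $\{u,v\}$ absent from the attaching label occurs in no bag of the piece $T^j$ (by the subtree property), so your routed path indeed avoids both $u$ and $v$.
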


\begin{proof}
In the following, let $Y_1,\ldots,Y_k$ be the bags in $T$ containing both $u$ and $v$. By the subtree property of tree decompositions, $Y_1,\ldots,Y_k$ induce a connected subgraph in $T$. Since $T$ is suitable, we can assume that $Y_2,\dots,Y_k$ are children of $Y_1$. Let $F$ be the forest obtained from $T$ by deleting the links $(Y_1,Y_2),\ldots,(Y_1,Y_k)$. For each $1 \leq i \leq k$, let $T_i$ be the subtree in $F$ containing $Y_i$ and let $V_i = \{x \in V \mid \text{ there is a bag containing } x \text{ in } T_i\}$.

Let $T$ contain $k \geq 3$ bags that contain both $u$ and $v$. We prove that $G-\{u,v\}$ has at least $k$ connected components. Let $i,j \in \{1,\ldots,k\}$ with $i \neq j$. By the subtree property of tree decompositions and $Y_i \neq Y_j$, we have $V_i \cap V_j = \{u,v\}$. Since all bags in $T$ contain exactly three vertices, the sets $V_i - V_j$ and $V_j - V_i$ are non-empty. We need to show that for all $x \in V_i - V_j$ and all $y \in V_j - V_i$ there is no edge $xy$ in $G$. Assume to the contrary that such an edge $xy$ exists. Then there is a bag $B$ in $T$ that contains both $x$ and $y$. Since $x \in V_i - \{u,v\}$ and by the subtree property, $T_i$ is the only tree with a bag containing $x$. Similarly, $T_j$ is the only tree with a bag containing $y$. This contradicts the existence of $B$. Thus, every of the $V_i - \{u,v\}$ is the vertex set of a connected component of $G-\{u,v\}$.

Let $G -\{u,v\}$ have $\ell \geq 3$ connected components. We prove that $T$ contains at least $\ell$ bags that contain both $u$ and $v$. It is well-known that every connected component that is obtained by deleting a minimal vertex separator is adjacent to all vertices of this separator. Since $G$ is $2$-connected, $\{u,v\}$ is a minimal vertex separator of $G$. It follows that $G$ contains a $K_{2,\ell}$-subdivision $H$ with branch-vertices $u$ and $v$ and non-branch-vertices $x_1,\ldots,x_\ell$ different from $u$ and $v$. According to Lemma~\ref{lem:K23minorInBag}, at least one bag of $T$ contains both $u$ and $v$. Thus, there is at least one $Y_i$, $T_i$ and $V_i$ defined as above (note that $T_1 = T$ and $V_1 = V$ if $Y_i$ is the only bag containing $u$ and $v$). We prove that no set $V_i$ contains two vertices $x_a$ and $x_b$ with $1 \leq a < b \leq \ell$. This gives the claim, as it implies that there are at least $l$ trees $T_i$ and thus, at least $l$ bags $Y_i$, each of which contains $u$ and $v$. By construction of $T_i$, $T_i$ has exactly one bag $C$ that contains $u$ and $v$ (note that the uniqueness of $C$ exploits the fact that $T$ is suitable). Let $c$ be the vertex $C \setminus \{u,v\}$. Assume to the contrary that $T_i$ contains a bag $A$ containing $x_a$ and a bag $B$ containing $x_b$ ($A = B$ is possible). Let $L$ be the least common ancestor of $A$ and $B$ in $T_i$. We know from the existence of $H$ that there are two independent paths from $x_a$ to $u$ and $v$, respectively, that share only the vertices $u$ and $v$; similarly, there are two such independent paths from $x_b$ to $u$ and $v$, respectively. It follows that $L \neq C$ and, in particular $c \notin \{x_a,x_b\}$, since otherwise two of the four independent paths from $\{x_a,x_b\}$ to $\{u,v\}$ would intersect in $c \notin \{u,v\}$. However, if $L \neq C$, two of these independent paths must intersect in $c$ as well in order to reach $u$ and $v$, which gives the desired contradiction.
\qed\end{proof}

From Lemmata~\ref{lem:K23minor} and~\ref{obs:graphTreeDecomp1}, we obtain the following corollary.

\begin{corollary}
\label{cor:graphTreeDecomp2}
Let $T$ be a suitable tree decomposition of a $2$-connected partial $2$-tree $G$. Then $G$ is outerplanar if and only if for every two nodes $u,v \in V$ at most two bags of $T$ contain $u$ and $v$.
\end{corollary}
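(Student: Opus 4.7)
The plan is to derive the corollary almost immediately by composing Lemma~\ref{lem:K23minor} and Lemma~\ref{obs:graphTreeDecomp1}, proving each implication by its contrapositive. The bridge that links both lemmas is the combinatorial characterization ``$u,v$ disconnect $G$ into at least three components,'' which appears in both statements; once this is recognized, the corollary is a direct translation between the subdivision language and the bag language of suitable tree decompositions.

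For the ``only if'' direction, I assume that some pair $u,v \in V$ is contained in at least three bags of $T$. Applying Lemma~\ref{obs:graphTreeDecomp1} to this pair yields that $G-\{u,v\}$ has at least three connected components and that $G$ contains a $K_{2,k}$-subdivision with $k \geq 3$. Restricting such a subdivision to any three of its internally disjoint $u$-$v$ paths produces a $K_{2,3}$-subdivision in $G$, and Lemma~\ref{lem:K23minor} then concludes that $G$ is not outerplanar.

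For the ``if'' direction, I assume $G$ is not outerplanar. Lemma~\ref{lem:K23minor} supplies a $K_{2,3}$-subdivision in $G$, together with the additional information that its two branch vertices $u$ and $v$ satisfy that $G-\{u,v\}$ has at least three connected components. Feeding this specific pair $u,v$ into Lemma~\ref{obs:graphTreeDecomp1} forces $T$ to contain at least three bags that simultaneously contain $u$ and $v$, which is the negation of the hypothesis and closes the contrapositive.

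I do not foresee any real obstacle: the nontrivial work has already been absorbed into Lemmata~\ref{lem:K23minor} and~\ref{obs:graphTreeDecomp1}. The only small point worth stating explicitly is that the branch vertices produced by Lemma~\ref{lem:K23minor} are exactly the pair to which Lemma~\ref{obs:graphTreeDecomp1} must be applied, but this matching is already guaranteed by the second sentence of Lemma~\ref{lem:K23minor}. The proof should therefore be very short, essentially two or three sentences per direction.
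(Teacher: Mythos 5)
Your proposal is correct and matches the paper exactly: the paper gives no separate argument, stating only that the corollary follows ``From Lemmata~\ref{lem:K23minor} and~\ref{obs:graphTreeDecomp1},'' and your two contrapositive directions are precisely the intended composition of those lemmas. Nothing is missing.
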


Corollary~\ref{cor:graphTreeDecomp2} allows us to efficiently find all $K_{2,k}$-subdivisions for $k \geq 3$ in a $2$-connected partial $2$-tree by finding $k$ pairwise adjacent bags in a suitable tree decomposition that share the same two vertices $u$ and $v$.

\subsection{Suitable Data Structures for Finding the Lex Shortest Paths}
\label{sec:datastructure}

Another useful tool in our algorithm will be the following data structure. 
It supports the query for an intermediate vertex that lies on a shortest path between two nodes. The following lemma is along the lines of~\cite{FredericksonJ88}. 

\begin{lemma}
\label{Lemma_routingInTrees}
Let $T$ be an unrooted tree decomposition of $G$. There is a linear space data-structure with $O(n)$ preprocessing time that supports the following query: Given a bag $A \in T$ and a vertex $v$ in $G$ that is not in $A$, find the link incident to $A$ that leads to some bag containing $v$. The query time is $O(\log d)$, where $d$ is the degree of $A$ in $T$.
\end{lemma}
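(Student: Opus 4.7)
The plan is to root $T$ at an arbitrary bag and exploit the subtree property: for every vertex $v\in V(G)$ the set $S_v$ of bags containing $v$ is a subtree of $T$, so it has a unique bag $R(v)$ of minimum depth, of which every other bag of $S_v$ is necessarily a descendant. During a single preorder DFS of $T$ I would record for every bag $B$ a father pointer and entry/exit timestamps $(\mathrm{in}(B),\mathrm{out}(B))$; the list of $B$'s children is then automatically sorted by $\mathrm{in}$-value (free of charge, since this is the order in which DFS visits them). In the same traversal I compute $R(x)$ for every vertex $x$: when a bag $B$ is entered, iterate over the (at most three) vertices of $B$ and set $R(x)\leftarrow B$ for each $x$ not yet assigned; because $R(x)$ is, by definition, the first bag of $S_x$ reached in preorder, this is correct. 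The preprocessing thus uses $O(n)$ time and $O(n)$ space.

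For a query $(A,v)$ with $v\notin A$, I would proceed in two cases after a single $O(1)$ interval comparison between $(\mathrm{in}(A),\mathrm{out}(A))$ and $\mathrm{in}(R(v))$, which tells us whether $R(v)$ is a descendant of $A$. If $R(v)$ is not a descendant of $A$, then the whole subtree $S_v$ is disjoint from the rooted subtree of $A$: otherwise $S_v$ would contain both $R(v)$ and some descendant of $A$, and by its connectedness in $T$ this would force $S_v\ni A$, contradicting $v\notin A$. Hence the correct answer is the link to the father of $A$ (note that $A$ cannot be the root here, since the root's rooted subtree is all of $T$ and so contains $R(v)$). If, on the other hand, $R(v)$ is a descendant of $A$, then $S_v$ lies entirely inside the rooted subtree of a unique child $C$ of $A$, and I identify $C$ by binary-searching the child list of $A$ (already sorted by $\mathrm{in}$-value) for the one whose interval $[\mathrm{in}(C),\mathrm{out}(C)]$ contains $\mathrm{in}(R(v))$; this takes $O(\log d)$ time.

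The main obstacle is to pin the query cost down to $O(\log d)$ rather than the trivial $O(\log n)$: we must avoid consulting any global search structure and confine the binary search to a single adjacency list of length $d$. This is handled by storing the children separately per bag in DFS order, which simultaneously yields the sortedness required by the binary search and the local degree bound. Apart from this localization step, correctness and the claimed complexities follow directly from the subtree property of tree decompositions and the standard Euler-tour characterization of ancestor relations produced by a DFS.
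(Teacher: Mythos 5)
Your proposal is correct and follows essentially the same route as the paper's proof: a DFS of $T$ with preorder timestamps/intervals, plus an $O(\log d)$ binary search over the DFS-ordered neighbours of $A$. The only cosmetic differences are that you anchor $v$ at the topmost bag $R(v)$ of its subtree and dispose of the father direction with an $O(1)$ ancestor test before binary-searching the children, whereas the paper labels $v$ with a dfs-number and also stores the complementary interval at the father so that all neighbours are searched uniformly.
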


\begin{proof}
Note that the desired link is unique, as $T$ is a tree decomposition. For building the data structure, we perform a depth first search ($\dfs$) on $T$, starting at an arbitrary artificial root, and label every bag $X$ with a dfs-number. We label each vertex of a bag $X$ with the dfs-number of $X$. For any bag $X$ and the subtree $T(X)$ of $T$ that is rooted at $X$, the bags in $T(X)$ get consecutive dfs-numbers. Hence, these numbers form an interval, which we can store in constant space at $X$ during the depth first search; i.e., in linear total time. Similarly, the bags not in $T(X)$ get dfs-numbers that are consecutive in the cyclic order of dfs-numbers; we store the corresponding interval at the father bag of $X$ (if exists). The desired answer for the query is then obtained by performing a binary search on the neighboring bags of $A$ (performed in the same order as the dfs) that stops at the bag having an interval that contains the label of $v$. This takes time $O(\log d)$. 
\qed
\end{proof}

We are finally ready to show that for any long edge $\{u,v\}$ a shortest path $\short(u,v)$ between $u$ and $v$ can be computed in time $O(|E(\short(u,v))|)$.
The following lemmata will be useful also to identify the subtree of the tree decomposition that contains $\short(u,v)$ for two branch vertices $u$ and $v$ with $\{u,v\} \notin E$.

\begin{lemma}[Lemma~3.2 in~\cite{ChaudhuriZ00}]
\label{lem:DistanceOracle}
Given a partial $k$-tree $G$ and an optimal tree decomposition $T$ of
$G$, there is an algorithm with running time $O(k^3n)$ that outputs the
distances of all vertex pairs that are contained in common bags and
that, for each such vertex pair, outputs some intermediate vertex of a shortest path between the vertices.
\end{lemma}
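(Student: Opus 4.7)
The plan is to use dynamic programming on the (rooted) tree decomposition in two phases: a bottom-up pass that computes shortest-path distances inside the subtree of each bag, followed by a top-down pass that extends these to full distances in $G$. Throughout, each bag has at most $k+1$ vertices, so any ``local'' Floyd--Warshall-style computation on a single bag costs $O(k^3)$; since an optimal tree decomposition of a partial $k$-tree can be assumed to have $O(n)$ bags (apply smoothing if necessary), this yields the claimed $O(k^3 n)$ bound.

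For the bottom-up phase, root $T$ arbitrarily and, for each bag $X$, let $G_X$ denote the subgraph of $G$ induced by the union of bags in the subtree rooted at $X$. I compute the distance matrix $D_X$ of shortest-path distances in $G_X$ between pairs of vertices of $X$. If $X$ is a leaf, $D_X$ is read off directly from edge weights in $G[X]$. Otherwise, initialize $D_X$ from $G[X]$; for each child $Y$ with separator $S := X \cap Y$ of size at most $k$, relax $D_X[a][b] \leftarrow \min(D_X[a][b], D_Y[a][b])$ for all $a,b \in S$; finally run a Floyd--Warshall pass using the vertices of $X$ as intermediates. Correctness follows because any $u$--$v$ path in $G_X$ with $u, v \in X$ decomposes into subpaths each lying entirely in $G[X]$ or in some $G_Y$ with endpoints in $X \cap Y$. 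The child relaxation contributes $O(k^2)$ per parent--child link and $O(k^2 n)$ in total; the per-bag Floyd--Warshall contributes $O(k^3 n)$ summed over all bags.

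For the top-down phase, starting at the root (where $G_X = G$, so $D_X$ is already the full distance matrix), I visit each bag $X$ after its parent $P$ and compute the analogous matrix $\widehat D_X$ reflecting distances in all of $G$. Let $S := X \cap P$. Any shortest $u$--$v$ path in $G$ with $u,v \in X$ either stays within $G_X$ (captured by $D_X$) or crosses $S$ twice. Hence, initialize $\widehat D_X[u][v] \leftarrow D_X[u][v]$ for all $u,v \in X$, override $\widehat D_X[s_1][s_2] \leftarrow \min(\widehat D_X[s_1][s_2], \widehat D_P[s_1][s_2])$ for all $s_1, s_2 \in S$, and then apply Floyd--Warshall on the vertices of $X$. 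This correctly assembles external detours because $S$ separates $G_X \setminus S$ from $G \setminus G_X$; again $O(k^3)$ per bag.

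For the intermediate-vertex requirement, I augment each relaxation with a witness pointer: whenever $d[u][v]$ improves by combining $d[u][w] + d[w][v]$ for some $w$, record $w$ as the intermediate vertex for $(u,v)$. When a child value $D_Y[a][b]$ or a parent value $\widehat D_P[s_1][s_2]$ is inherited, its stored witness is inherited as well; since such a witness always lies in a bag common to its two endpoints, shortest paths can later be reported in time proportional to their length by recursively expanding these witnesses. The main point of care, and the one that makes the $O(k^3 n)$ bound tight, is ensuring that each local Floyd--Warshall relaxes against the correct ``interface'' values---namely $D_Y$ on the child side during the bottom-up pass and $\widehat D_P$ on the separator during the top-down pass---so that the separator property of the tree decomposition guarantees every $G$-shortest path is captured using only vertices of the current bag as intermediates.
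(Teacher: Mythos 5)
Your proof is correct, but note that the paper does not prove this statement at all: it is imported verbatim as Lemma~3.2 of Chaudhuri and Zaroliagis~\cite{ChaudhuriZ00}, with the only added remark being that the directed formulation carries over to undirected graphs by replacing each edge with two oppositely oriented arcs. Your two-phase dynamic program (bottom-up distance matrices $D_X$ on the subtree graphs $G_X$, then a top-down pass propagating global distances through the separators $X \cap P$, with an $O(k^3)$ Floyd--Warshall per bag and witness pointers for intermediate vertices) is essentially the same scheme used in~\cite{ChaudhuriZ00}, so you have supplied a self-contained derivation of a result the paper treats as a black box. Two details worth keeping in mind if this were to stand as a full proof: the tightness argument showing that a recorded witness $w$ with $d[u][v]=d[u][w]+d[w][v]$ really lies on a shortest path (it follows since both summands dominate the true distances and the triangle inequality forces equality) should be made explicit, and one should say that the given optimal tree decomposition is first reduced to $O(n)$ bags, as you indicate with the smoothing remark; neither point affects correctness or the $O(k^3 n)$ bound.
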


Lemma~\ref{lem:DistanceOracle} is originally stated for directed graphs
in~\cite{ChaudhuriZ00}. However, representing each undirected edge with
two edges oriented in opposite directions gives the above undirected
variant. 

We extend Lemma~\ref{lem:DistanceOracle} by giving the following data
structure. 

\begin{lemma}\label{lem:DistanceOracleBoost}
Given a connected partial $2$-tree $G$ and a suitable tree decomposition
$T$ of $G$, there is an $O(n)$-space data structure requiring $O(n)$
preprocessing time that supports the following queries, given two vertices $u$ and $v$ and a bag $X \in T$ that contains
$u$ and $v$:
\begin{itemize}
	\item Compute in time $O(1)$ the length of a shortest path
between $u$ and $v$ \emph{(distance query)}.
	\item Compute in time $O(1)$ an intermediate vertex $w$ of some shortest path between $u$ and $v$, and a bag $Y \in T$ such that $Y=\{u,v,w\}$, providing that any shortest path between $u$ and $v$ has at least two edges \emph{(intermediate vertex query)}.
	\item Compute in time $O(|E(P)|)$ a shortest path $P$ between
$u$ and $v$ \emph{(shortest path extraction)}.
\end{itemize}
\end{lemma}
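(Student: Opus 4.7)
The plan is to boost the data structure of Lemma~\ref{lem:DistanceOracle} applied with $k=2$, which already gives $O(n)$ preprocessing and $O(1)$ retrieval of the distance and some intermediate vertex for every pair of vertices that co-occur in a bag. Since a suitable tree decomposition of a partial $2$-tree has only $n-2$ bags, each of size exactly three (by the proof of Lemma~\ref{lem:makesuitable}), there are only $O(n)$ such pairs and all tables fit into $O(n)$ total space; thus the distance query is answered by a direct table lookup in $O(1)$.

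For the intermediate vertex query, the main point is that the intermediate vertex $w$ produced by the dynamic program of Chaudhuri and Zaroliagis is always the third vertex of some bag $Y = \{u,v,w\}$ of $T$. Indeed, their DP relaxes a distance entry $d(u,v)$ only by expressions of the form $d(u,z) + d(z,v)$ in which $z$ is the unique third vertex of a bag containing both $u$ and $v$ (these relaxations occur precisely when a new vertex is introduced into a bag of size three). The relaxation that eventually realises $d(u,v)$ therefore yields a witness pair $(z,Y)$ with $Y=\{u,v,z\}$, which I plan to record alongside the distance tables; this extra bookkeeping costs $O(1)$ per DP step and preserves the $O(n)$ preprocessing bound. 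Under the hypothesis that every shortest path between $u$ and $v$ has at least two edges, the recorded pair answers the intermediate vertex query in $O(1)$.

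For the shortest path extraction, I use a recursive routine $\textsc{Extract}(u,v,X)$ that consults the data structure above. If $w(uv) = d(u,v)$ (with $w(uv) = \infty$ when $\{u,v\} \notin E$), which is checkable in $O(1)$ via the distance query together with the constant number of edges incident to $X$, then the single edge $\{u,v\}$ is returned as the shortest path. Otherwise the intermediate vertex query delivers $w$ and $Y = \{u,v,w\}$ in $O(1)$, and the procedure returns the concatenation of $\textsc{Extract}(u,w,Y)$ with $\textsc{Extract}(w,v,Y)$. Correctness follows from the DP invariant; termination is ensured because each recursive subcall works on a strictly shorter shortest subpath. The recursion tree has exactly $|E(P)|$ leaves (one per emitted edge of $P$) and performs $O(1)$ work per node, so the total running time is $O(|E(P)|)$.

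The main obstacle will be to verify cleanly that the DP of Chaudhuri--Zaroliagis can be instrumented to emit not just $w$ but the accompanying bag $Y = \{u,v,w\}$, and that the suitable-tree-decomposition hypothesis (in particular bags of size exactly three and the link-label property of Definition~\ref{def:suitable}) makes this augmentation free of asymptotic overhead. Once this is established, the distance, intermediate vertex, and shortest path extraction items all fall out of the augmented tables in the stated time bounds.
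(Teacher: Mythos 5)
Your distance-table setup and the recursive shortest-path extraction coincide with the paper's proof. The gap is in the intermediate vertex query, which is the heart of the lemma. You assert that the Chaudhuri--Zaroliagis dynamic program relaxes an entry $d(u,v)$ only via $d(u,z)+d(z,v)$ where $z$ is the third vertex of a bag containing both $u$ and $v$, so that the recorded witness automatically comes with a bag $Y=\{u,v,z\}$. This is precisely the step you flag as unverified, and it cannot simply be read off from Lemma~\ref{lem:DistanceOracle}: that lemma only promises \emph{some} intermediate vertex, and the algorithm of~\cite{ChaudhuriZ00} also propagates separator-pair entries directly across links of the decomposition (bottom-up and again top-down), in which case the witness for $(u,v)$ is inherited from deeper bags rather than produced by a relaxation through the current bag's third vertex; moreover, their procedure runs on a decomposition preprocessed in their own way (e.g., binarized, with duplicated bags), so the ``unique third vertex'' and the suitability property of Definition~\ref{def:suitable} that your bookkeeping relies on need not hold for the decomposition the DP actually manipulates. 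Making your claim precise would amount to re-proving a modified version of the Chaudhuri--Zaroliagis algorithm, which your proposal does not do.

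The paper avoids opening that black box. It takes Lemma~\ref{lem:DistanceOracle} as given (distance plus an \emph{arbitrary} intermediate vertex $z$), builds the tree-routing structure of Lemma~\ref{Lemma_routingInTrees}, and then, for each of the $O(n)$ vertex pairs sharing a bag $X$, locates during preprocessing a bag-coherent intermediate vertex: it routes from $X$ towards $z$; if the neighboring bag $A$ still contains $u$ and $v$ it iterates (at most a constant number of times, by suitability), and otherwise it argues that the third vertex $r$ of $X$ must lie on a shortest $u$--$v$ path, setting $w=r$, $Y=X$. This costs $O(\degree(X))$ per pair, hence $O(n)$ total preprocessing, and queries become table lookups. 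If you want to salvage your route, you would need to supply the missing structural analysis of the DP on a suitable decomposition (in particular, an induction showing that every inherited separator-pair witness is the third vertex of some bag containing $u$ and $v$); otherwise you should fall back on an explicit conversion step of the kind the paper uses.
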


Since a tree decomposition maintains for every edge the bag that contains it, the queries of Lemma~\ref{lem:DistanceOracleBoost} can in particular be performed when---instead of the bag $X$---an edge $\{u,v\} \in G$ is given.

\begin{proof}[of Lemma~\ref{lem:DistanceOracleBoost}]
We apply the algorithm of Lemma~\ref{lem:DistanceOracle} and store the
distance of every vertex pair $\{u,v\}$ that is contained in a common
bag, say in $X$, in a table linked to $X$. Since $T$ contains only
linearly many bags, this takes $O(n)$ space. The table supports distance
queries in constant time, as there are only constantly many vertex pairs
in each bag.

Assume for the moment that we know how to support the intermediate vertex query. Then we can easily support the shortest path extraction by first applying an intermediate
vertex query, which gives $Y$, and subsequently recursing on the two intermediate vertex queries $\{u,w\}$ and $\{w,v\}$, both in $Y$, until each shortest path is just an edge. This allows to extract a shortest path between $u$ and $v$ in time proportional to its length.

It remains to show how to support intermediate vertex queries. We initialize the data structure $D$ of
Lemma~\ref{Lemma_routingInTrees} for the tree decomposition $T$ in time
$O(n)$ and apply the algorithm of Lemma~\ref{lem:DistanceOracle} in time
$O(n)$. Let $X$ be a bag containing $u$ and $v$. By Lemma~\ref{lem:DistanceOracle}, we have already found an intermediate vertex $z$
between $u$ and $v$, but want to find an intermediate vertex $w$ that is in a common bag $Y$ with $u$ and $v$.
If $z$ does not exist, there is a shortest path that is just an edge, in which case we just set $w$ to be non-existent as well. If $z \in X$, we set $w = z$ and $Y=X$ and
are done.

Otherwise, we query $D$ with $(X,z)$ and get a link $(X,A)$ such that $z$ is contained in the subtree of $T$ that is separated by $(X,A)$ and does not contain $X$ (note that $A$ may be the father of $X$ in $T$). According to Lemma~\ref{Lemma_routingInTrees}, this query takes time proportional to at most the degree of $X$ in $T$.

We now distinguish two cases. In the case that $A$ contains $u$ and $v$,
we iterate this procedure on $A$ instead on $X$. In this iteration, this case cannot
happen more than a constant number of times, as $T$ is suitable, so any
path in the subtree of $T$ consisting of bags containing $\{u,v\}$ has
length at most $2$.

Otherwise, $A$ contains exactly one vertex of $\{u,v\}$, say $u$. Consider $X=\{u,v,r\}$ and the subtree $T_1$ of $T$ that is separated by the link $(X,A)$ and contains $A$. By the subtree property, $T_1$ cannot contain a bag with $v$, as then $v$ would also be contained in $A$. Since $T_1$ contains a part of a shortest path between $u$ and $v$, but has only $u$ and $r$ in common with $X$, $r$ must be an intermediate vertex. Since $X$ contains $u$, $v$, and $r$, we set $w = r$ and $Y = X$.

We investigate the preprocessing time of the data structure, i.e., the time spent computing for all vertex pairs $(u,v)$ the intermediate vertex $w$ and the bag containing all three vertices $\{u,v,w\}$.
In every bag $X$, there are only constantly many vertex pairs. For each such vertex pair, we could find $w$ in time $O(\degree(X))$, where $\degree(X)$ is the degree of $X$ in $T$. Hence, the
preprocessing time sums up to a linear total.
\qed
\end{proof}

\subsection{Obtaining $\MCB(G)$ from $\LSC(\tilde{G}_1), \ldots, \LSC(\tilde{G}_r)$}
\label{sec:graphTechnicalities}

As a last technicality, we show that---as claimed in the high-level overview of our algorithm---the disjoint union $\ex(\LSC(\tilde{G}_1)) \uplus \ldots \uplus \ex(\LSC(\tilde{G}_r))$ forms a minimum cycle basis of $G \setminus L$. 

Recall that $G_{j(u,v)}$ is the part of $\de(G \setminus L,u,v)$ containing the shortest path $\short(u,v)$ between $u$ and $v$ along which we have decomposed $G \setminus L$. 

\begin{definition}
For any cycle $C$ of $G \setminus L$, let $\ex(C)$ be the cycle obtained from $C$ by replacing the green edges $\green(u,v)$ in $C$ (if exist) by the shortest path $\short(u,v)$ between $u$ and $v$ in the part $G_{j(u,v)}$. 
For a set of cycles $\Ce$, let $\ex(\Ce):=\{\ex(C) \mid C \in \Ce\}$.
\end{definition}

\begin{theorem}
\label{thm:mapping1}
$\MCB(G \setminus L) = \ex(\LSC(\tilde{G}_1)) \uplus \ldots \uplus \ex(\LSC(\tilde{G}_r))$.
\end{theorem}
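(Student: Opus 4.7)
The plan is to prove Theorem~\ref{thm:mapping1} by induction on the number $s$ of decomposition steps that the algorithm performs to reduce $G \setminus L$ to the outerplanar pieces $\tilde{G}_1,\ldots,\tilde{G}_r$. The base case $s=0$ means $G \setminus L$ is already outerplanar, so $r=1$, $\tilde{G}_1 = G \setminus L$, there are no green edges, and $\ex$ acts trivially; the claim then reduces to $\MCB(G \setminus L) = \LSC(G \setminus L)$, which is precisely Lemma~\ref{lem:LSCMCB}. For the inductive step, it suffices to prove the following single-step lemma: whenever $H$ is a $2$-connected partial $2$-tree (possibly carrying green edges of weight $w(\short(u,v))$) and $\de(H,u,v)$ produces parts $H_1,\ldots,H_k$, then $\ex(\MCB(H_1)) \uplus \cdots \uplus \ex(\MCB(H_k))$ is an MCB of $H$. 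Iterating this lemma along the recursion tree of the decomposition yields the full theorem.

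For the single-step lemma I check the three defining properties of a minimum cycle basis. \textbf{Cardinality.} The vertices $u,v$ appear in all $k$ parts, so they are over-counted $k-1$ times each; the only added edges are either $k-1$ extra copies of $\{u,v\}$ (when $\{u,v\} \in E(H)$) or $k-1$ new green edges (otherwise). A direct count gives
\[
\sum_{i=1}^k \bigl(|E(H_i)| - |V(H_i)| + 1\bigr) \;=\; |E(H)| + (k-1) - |V(H)| - 2(k-1) + k \;=\; |E(H)| - |V(H)| + 1,
\]
which equals the dimension of the cycle space of $H$. \textbf{Independence.} For each $i$, let $F_i$ be the set of edges of $H$ lying strictly inside the $i$-th connected component of $H - \{u,v\}$, together with edges incident to $u$ or $v$ from that component. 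These edges are exclusive to part $H_i$ and survive $\ex$ unchanged. Projecting any supposed $\GF(2)$-dependence $\sum_{i,C}\alpha_{i,C}\ex(C) = 0$ onto $F_i$ reduces the sum to a $\GF(2)$-dependence among cycles of $\MCB(H_i)$ restricted to $F_i$; since the only edges of $H_i$ outside $F_i$ are the single green edge or, in $H_{j(u,v)}$, the edges of $\short(u,v)$, this restriction is equivalent to a cycle-space dependence inside $H_i$, contradicting the independence of $\MCB(H_i)$. Applying this to each $i$ forces all coefficients to vanish.

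\textbf{Weight minimality.} Since $w(\green(u,v)) = w(\short(u,v))$ by construction, expansion preserves the weight of every cycle, so $w(\text{RHS}) = \sum_{i=1}^k w(\MCB(H_i))$. By Lemma~\ref{lem:LSCMCB} applied to each partial $2$-tree $H_i$, this equals $\sum_i w(\LSC(H_i))$. Applying Lemma~\ref{lem:LSCMCB} to $H$ itself, the goal becomes $w(\LSC(H)) = \sum_i w(\LSC(H_i))$. I will establish this via a weight-preserving matching: every $C \in \LSC(H)$ either lies entirely in one part $H_i$ (and matches a $\LSC(H_i)$-cycle of equal weight) or traverses two components of $H - \{u,v\}$, in which case the lex-short definition forces $\lsp_H(u,v) \subseteq C$ and the other $u$-$v$ subpath $Q$ of $C$ lies in some component $j$; then $C$ matches the $\LSC(H_j)$-cycle $Q \cup \green(u,v)$, whose weight is $w(Q) + w(\short(u,v)) = w(Q) + w(\lsp_H(u,v)) = w(C)$. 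The reverse matching proceeds symmetrically. Combined with cardinality and independence, this shows the RHS has minimum total weight and is therefore an MCB of $H$.

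The main obstacle will be the weight-minimality step. One cannot hope for the set equality $\LSC(H) = \bigcup_i \ex(\LSC(H_i))$, because the algorithm decomposes along an \emph{arbitrary} shortest $u$-$v$ path $\short(u,v)$ rather than the unique $\lsp_H(u,v)$; the two coincide in weight but may differ in edge count or in the vertex-order tie-breaker, so individual lex short cycles may be swapped for equal-weight but non-lex-short counterparts. Consequently the weight-preserving matching above has to be formulated purely in terms of weights and existence of a suitable cycle in the right part, rather than a literal bijection of cycles; verifying that the count of cycles contributed by each part matches on both sides (so the matching is indeed bijective in each weight class) is the delicate point, and is precisely where the unique-vertex-separator structure $\{u,v\}$ and Lemma~\ref{lem:K23minor} come into play.
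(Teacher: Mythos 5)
Your skeleton matches the paper's: Theorem~\ref{thm:mapping1} is obtained by iterating a single-step lemma (the paper's Lemma~\ref{lem:mapping}), and your cardinality count and projection-based independence argument are essentially identical to the paper's. The gap is in the weight-minimality step, which is the actual crux, and which you only sketch while explicitly deferring "the delicate point". Concretely, three things go wrong with the direct matching between $\LSC(H)$ and $\LSC(H_1)\uplus\cdots\uplus\LSC(H_k)$ as you state it. First, if the second $u$-$v$ path $Q$ of a cycle $C\in\LSC(H)$ lies in the component $j(u,v)$ that contains $\short(u,v)$, then that part carries \emph{no} green edge, so "$Q\cup\green(u,v)$" is not even a cycle of $H_{j(u,v)}$; the candidate partner is $Q\cup\short(u,v)$, and you would still have to prove it is lex short there. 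Second, you never justify membership in $\LSC(H_i)$ at all: because $\green(u,v)$ stands for an \emph{arbitrary} shortest path (equal weight to $\lsp_H(u,v)$, but a single edge), the edge-count and $\sigma$ tie-breaks inside a part can differ from those in $H$, so a lex short cycle of $H$ confined to one part need not be lex short in that part, and conversely. Third, the "symmetric" reverse direction has a concrete problem cycle: when $i(u,v)\neq j(u,v)$, the part containing $\lsp_H(u,v)$ also receives a green edge, and $\{\green(u,v)\}\cup\lsp_H(u,v)$ is a lex short cycle of that part whose natural preimage $\lsp_H(u,v)\cup\short(u,v)$ need not belong to $\LSC(H)$; so a cycle-by-cycle matching into $\LSC(H)$ cannot work as described, and your fallback of matching "per weight class" is exactly the unproven step.

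The paper closes this gap with an extra idea you are missing: it introduces a second decomposition $\decomp(G,u,v)$ along the \emph{lex} shortest path (blue placeholder edges), proves $\LSC(H)=\expand(\LSC(G^*_1))\uplus\cdots\uplus\expand(\LSC(G^*_k))$ (Lemma~\ref{lem:lscdecomp}, resting on Observation~\ref{lem:lspobservations} from~\cite{CTW12}), and then compares the blue and green decompositions part by part via an explicit weight-preserving bijection $\psi$. There the two decompositions differ only in the parts $i(u,v)$ and $j(u,v)$, Observation~\ref{cor:nocycle} pins down their structure, and the two "extra" cycles $\{\green(u,v)\}\cup\lsp(u,v)$ and $\{\blue(u,v)\}\cup\short(u,v)$ are matched to each other, both having weight $2w(\lsp(u,v))$. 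Some intermediate step of this kind (or an equivalent argument) is needed to turn your plan into a proof; as written, the proposal asserts precisely what has to be proved.
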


Theorem~\ref{thm:mapping1} follows from iteratively applying the following lemma.
\begin{lemma}
\label{lem:mapping}
Let $G$ be a graph in which every edge is tight. 
Let $u$ and $v$ be the two branch vertices of a $K_{2,3}$-subdivision in $G$. 
Let $G_1, \ldots, G_k$ be the subgraphs resulting from the decomposition $\de(G,u,v)$. For each $1 \leq h \leq k$, let $\B_h$ be a minimum cycle basis of the graph $G_h$. 
Then 
$\EE:=\ex(\B_1) \cup \ldots \cup \ex(\B_k)$ is a minimum cycle basis of $G$.
\end{lemma}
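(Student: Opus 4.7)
My plan is to verify three properties of $\EE$ in order: (i) its cardinality is $m-n+1$, (ii) it is linearly independent over $\GF(2)$, and (iii) its total weight equals $w(\MCB(G))$. Together, (i) and (ii) show that $\EE$ is a cycle basis of $G$, and (iii) ensures it has minimum weight.

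For (i), each $G_h$ is connected (since $H_h$ is a connected component of $G-\{u,v\}$ adjacent to both $u$ and $v$ in the $2$-connected graph $G$), so $|\B_h|=m(G_h)-n(G_h)+1$. The sets $V(H_h)$ partition $V\setminus\{u,v\}$, giving $\sum_h n(G_h)=n+2k-2$; each edge of $G$ lies in exactly one $G_h$, while the edge $\{u,v\}$ or the green edge $\green(u,v)$ is shared across all $k$ parts, giving $\sum_h m(G_h)=m+k-1$. Combining yields $|\EE|=\sum_h |\B_h|=m-n+1$, as needed.

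For (ii), the key observation is that every edge of $G_h$ other than $\{u,v\}$ and $\green(u,v)$ lies inside $H_h$ or between $H_h$ and $\{u,v\}$, and hence appears in no other $G_{h'}$. Since $\ex$ substitutes only green edges by edges of $\short(u,v)\subseteq G_{j(u,v)}$, these ``part-unique'' edges remain unique to $\ex(\B_h)$ after the substitution. From a supposed vanishing combination $\bigoplus_h \bigoplus_{C\in B_h}\ex(C)=0$, restriction to the part-unique edges of $H_h$, applied first for every $h\neq j(u,v)$, forces $\bigoplus_{C\in B_h} C\subseteq\{\{u,v\},\green(u,v)\}$ in $G_h$; since $G_h$ contains at most one of these edges (it is simple) and a single edge is not Eulerian, this sum is empty, and independence of $\B_h$ yields $B_h=\emptyset$. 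The residual relation over $B_{j(u,v)}$, whose cycles carry no green edges, collapses to $\bigoplus_{C\in B_{j(u,v)}}C=0$ and similarly forces $B_{j(u,v)}=\emptyset$.

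For (iii), the identity $w(\green(u,v))=w(\short(u,v))$ makes $\ex$ weight-preserving on each cycle, yielding $w(\EE)=\sum_h w(\MCB(G_h))$, and as $\EE$ is a basis we have $w(\EE)\geq w(\MCB(G))$. The matching reverse inequality is the main obstacle. My approach is to apply Lemma~\ref{lem:LSCMCB} to both $G$ and each $G_h$---each $G_h$ being a partial $2$-tree, as $\{u,v\}$ lies in some bag of a suitable tree decomposition of $G$ by Lemma~\ref{lem:K23minorInBag}, and restricting this decomposition to $G_h$ preserves width two even after inserting the green edge into a bag containing $\{u,v\}$---so that both MCBs coincide with lex short cycles. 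For the canonical choice $\short(u,v)=\lsp(u,v)$ and exploiting that every edge of $G$ is tight, a structural analysis across the $2$-separator yields a weight-preserving bijection $\LSC(G)\leftrightarrow\biguplus_h \LSC(G_h)$: cycles of $\LSC(G)$ contained in a single $G[V(H_h)\cup\{u,v\}]$ correspond to lex short cycles of $G_h$ avoiding the green edge, while cycles of $\LSC(G)$ spanning two components use $\lsp(u,v)$ on one side and correspond via $\ex$ to lex short cycles of the other part that use the green edge. A short calculation verifying invariance of $\sum_h w(\MCB(G_h))$ under the choice of shortest path then extends the equality to arbitrary $\short(u,v)$. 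The principal technical subtlety lies in ensuring that this bijection preserves lex-shortness, which requires careful case analysis of how lex shortest paths in $G$ relate to lex shortest paths in $G_h$ that may traverse the green edge.
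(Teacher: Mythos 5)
Your steps (i) and (ii) are correct and essentially the paper's own argument: the cardinality computation $\sum_h(m_h-n_h+1)=m-n+1$ and the restriction-to-part-unique-edges argument for independence (including the observation that cycles of $\B_{j(u,v)}$ carry no green edge) are fine, and your justification that each $G_h$ is again a partial 2-tree, so that Lemma~\ref{lem:LSCMCB} applies to it, is also acceptable.

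The gap is in step (iii), which is where the entire difficulty of the lemma lies; there your text announces a plan rather than carrying it out, and the two claims it rests on are exactly the nontrivial content of the paper's proof. First, the identity for the canonical decomposition along $\lsp(u,v)$ (the paper's $\decomp(G,u,v)$), namely $\LSC(G)=\expand(\LSC(G^*_1))\uplus\ldots\uplus\expand(\LSC(G^*_k))$, is asserted via ``a structural analysis across the $2$-separator''; the delicate direction $\expand(\LSC(G^*_h))\subseteq\LSC(G)$ (a lex short cycle of a part using the placeholder edge must, after expansion, contain the lex shortest path of $G$ between \emph{every} pair of its vertices, not only between $u$ and $v$) is precisely Observation~\ref{lem:lspobservations} and Lemma~\ref{lem:lscdecomp} in the paper, and you neither prove it nor cite it. Second, the ``short calculation verifying invariance of $\sum_h w(\MCB(G_h))$ under the choice of shortest path'' is not short: when $i(u,v)\neq j(u,v)$ the two decompositions differ on the pair $\{i,j\}$ in a way that changes the number of edges, hence the number of basis cycles, of the affected parts ($G_{i}$ has one more edge than $G^*_{i}$, and conversely for $G_{j}$ versus $G^*_{j}$), so no cycle-by-cycle comparison exists a priori. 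The paper handles this with an explicit weight-preserving bijection $\psi$ between $\bigcup_h\LSC(G^*_h)$ and $\bigcup_h\LSC(G_h)$, using Observation~\ref{cor:nocycle} (the part containing the chosen path has no cycle through both $u$ and $v$, which also forces $\short(u,v)=\lsp(u,v)$ when $i=j$) and matching the two extra cycles $D=\{\green(u,v)\}\cup\lsp_{G}(u,v)$ and $D^*=\{\blue(u,v)\}\cup\short(u,v)$, both of weight $2w(\lsp_{G}(u,v))$. You yourself flag the preservation of lex-shortness as the ``principal technical subtlety'' but do not resolve it; since the minimality of $\EE$ rests entirely on these two claims, the proposal as written does not establish the lemma.
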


To prove Lemma~\ref{lem:mapping} we first introduce an alternative decomposition, $\decomp(G,u,v)$, which decomposes a non-outerplanar graph with respect to the \emph{lex short path} between the two branch vertices---as opposed to the decomposition $\de(G,u,v)$ which decomposes $G$ along an arbitrary shortest path.

Similarly to the decomposition $\de(G,u,v)$ described in Section~\ref{sec:highlevel} let $G$ be a graph that is not outerplanar and let $u,v \in V$ be the branch vertices of a $K_{2,3}$ subdivision in $G$. Let $H_1, \ldots, H_k$ be the connected components of $G-\{u,v\}$.

\textbf{Case 1: } If $\{u,v\} \in E$, set 
$G^*_h:=G_h=G[V(H_h) \cup \{u,v\}]$, $1 \leq h \leq k$. 

\textbf{Case 2: } If $\{u,v\} \notin E$, let $i(u,v) \in \{1, 2, \ldots, k\}$ such that 
$\lsp(u,v) \in G[V(H_{i(u,v)}) \cup \{u,v\}]$. 
Set 
$G^*_{i(u,v)}:=G[V(H_{i(u,v)}) \cup \{u,v\}]$, and for all 
$1 \leq h \neq {i(u,v)} \leq k$ we set 
$G^*_h:=G[V(H_h) \cup \{u,v\}] \cup \blue(u,v)$, where $\blue(u,v)$ is a marked edge $\{u,v\}$ that serves as a placeholder for the lex shortest path $\lsp(u,v)$ between $u$ and $v$ (which is not contained in $G^*_h$).
Set $w(\blue(u,v))=w(\lsp(u,v))$ 
For any cycle $C$ of $G$, let $\expand(C)$ be the cycle obtained from $C$ by replacing the blue edges $\blue(u,v)$ in $C$ (if exist) by the lex shortest path $\lsp(u,v)$. 
For a set of cycles $\Ce$, let $\expand(\Ce):=\{\expand(C) \mid C \in \Ce\}$.

The proof of Lemma~\ref{lem:mapping} is based on the following result, which we believe to be of independent interest.
  
\begin{lemma}
\label{lem:lscdecomp} 
In the setting of Lemma~\ref{lem:mapping} let $G^*_1, \ldots, G^*_k$ be the subgraphs resulting from the decomposition $\decomp(G,u,v)$. 
Then $\LSC(G) = \expand(\LSC(G^*_1)) \uplus \ldots \uplus \expand(\LSC(G^*_k))$.
\end{lemma}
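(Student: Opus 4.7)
My plan is to prove both inclusions $\LSC(G) \subseteq \expand(\LSC(G^*_1)) \uplus \ldots \uplus \expand(\LSC(G^*_k))$ and its reverse, via a bijective correspondence that associates each lex short cycle $C \in \LSC(G)$ with a lex short cycle $C^*$ in a unique $G^*_h$ satisfying $\expand(C^*)=C$. For the forward direction I would take $C \in \LSC(G)$ and use that $\{u,v\}$ is a vertex separator of $G$: since $C$ is a simple cycle, either its vertex set minus $\{u,v\}$ is contained in a single component $H_h$, or $C$ passes through both $u$ and $v$ and splits at them into two internally disjoint arcs, each lying in some $H_h \cup \{u,v\}$. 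In the latter case, since $C$ is lex short and contains both $u$ and $v$, the lex shortest path $\lsp(u,v)$ must appear as a subpath of $C$ and hence coincides with exactly one of the two arcs (which then lies in $H_{i(u,v)} \cup \{u,v\}$ by definition of $i(u,v)$). I can then define $C^*$ in the appropriate $G^*_h$: if $h = i(u,v)$ I set $C^* = C$; otherwise I replace $\lsp(u,v)$ by $\blue(u,v)$ to obtain $C^*$. By construction $\expand(C^*)=C$, and the index $h$ is uniquely determined by which component contains the ``non-$\lsp(u,v)$'' vertices of $C$, giving both the map into the union and the required disjointness.

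The substantive step is to verify that this $C^*$ is a lex short cycle in $G^*_h$, and conversely that $\expand$ applied to any lex short cycle in $G^*_h$ yields an element of $\LSC(G)$. Both directions reduce to establishing a correspondence between lex shortest paths in $G^*_h$ and in $G$: for every pair $x,y \in V(G^*_h)$, applying $\expand$ to $\lsp_{G^*_h}(x,y)$ produces $\lsp_G(x,y)$, while $\lsp_G(x,y)$ either already lies in $G^*_h$ or else its intersection with the separator $\{u,v\}$ forces it to traverse $\lsp(u,v)$, so it shrinks back to $\lsp_{G^*_h}(x,y)$ by reversing the $\blue(u,v)$ replacement. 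The weight identity $w(\blue(u,v))=w(\lsp(u,v))$ handles the first criterion in the definition of lex shortest paths; the second and third tiebreakers are compared on pairs of paths whose edge counts and interior vertex sets differ consistently by the same $\lsp(u,v)/\blue(u,v)$ swap, so they are preserved under expansion and contraction. The hypothesis that every edge of $G$ is tight further rules out any detour that leaves and re-enters $G[V(H_h) \cup \{u,v\}]$, since such a detour crosses $\{u,v\}$ twice and would thus embed a $u$-$v$ subpath distinct from $\lsp(u,v)$, contradicting the uniqueness of the lex shortest path.

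The hardest part will be careful bookkeeping of this lex-shortest-path correspondence in the presence of the single placeholder edge $\blue(u,v)$ standing in for a possibly multi-edge path $\lsp(u,v)$: the asymmetry affects all three tiebreakers (weight, edge count, lexicographic vertex order) if one is not careful. I plan to handle this by always comparing a path $P^*$ in $G^*_h$ with its expansion $\expand(P^*)$ in $G$ rather than with arbitrary $G$-paths, and by leveraging uniqueness of $\lsp(u,v)$ to conclude that any alternative $u$-$v$ crossing is strictly lex-worse. Once this is in place, the two inclusions follow by noting that a cycle is lex short exactly when it contains the lex shortest path between each of its vertex pairs, and this property transports back and forth under $\expand$ and $\decomp$.
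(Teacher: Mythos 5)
Your overall skeleton (locating each $C\in\LSC(G)$, or its shrunk version, in a unique part $G^*_h$, and getting disjointness from the disjointness of $H_1,\ldots,H_k$) matches the paper, but the load-bearing step of your plan is the claimed path-level correspondence ``for every pair $x,y \in V(G^*_h)$, $\expand(\lsp_{G^*_h}(x,y))=\lsp_G(x,y)$'', with all three tiebreakers ``preserved under expansion and contraction''. This claim is false, precisely because $\blue(u,v)$ is a \emph{single} edge standing in for a path with possibly many edges: the second criterion (number of edges) is not preserved, and routes through $\blue(u,v)$ become strictly better in the edge-count tiebreaker after shrinking. Concretely, let $G$ consist of three internally disjoint $u$--$v$ paths: $(u,a_1,a_2,v)$ with edge weights $1,1,1$; $(u,x,c_1,c_2,c_3,y,v)$ with weights $1, 1.25, 1.25, 1.25, 1.25, 1$; and $(u,d,v)$ with weights $2,2$. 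All edges are tight, $G$ is a $2$-connected partial 2-tree containing a $K_{2,3}$-subdivision with branch vertices $u,v$, $\{u,v\}\notin E$, and $\lsp_G(u,v)=(u,a_1,a_2,v)$ of weight $3$. In the part $G^*_h$ containing $x,c_1,c_2,c_3,y$ together with $\blue(u,v)$ of weight $3$, the two $x$--$y$ paths of weight $5$ are $(x,c_1,c_2,c_3,y)$ (four edges) and the path through $\{x,u\}$, $\blue(u,v)$, $\{v,y\}$ (three edges); hence $\lsp_{G^*_h}(x,y)$ uses the blue edge, whereas $\lsp_G(x,y)=(x,c_1,c_2,c_3,y)$ (its competitor in $G$ has five edges). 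So the correspondence you intend to establish does not exist. The lemma itself survives in this example only because the relevant lex short cycle happens to contain \emph{both} competing paths: after shrinking, the lex shortest path between $x$ and $y$ switches to the other arc of the cycle. Any correct proof must accommodate this switching, which your bookkeeping device of ``comparing $P^*$ only with $\expand(P^*)$'' cannot do.

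There is a second, smaller gap: for the inclusion $\expand(\LSC(G^*_h))\subseteq\LSC(G)$ you must verify the lex-short condition of $\expand(C^*)$ also for pairs involving interior vertices of $\lsp_G(u,v)$, which are not vertices of $G^*_h$ at all, so your reduction of ``both directions'' to pairs $x,y\in V(G^*_h)$ is incomplete. The paper avoids both issues by not attempting any per-pair path correspondence: it imports Lemma~2.5 and Corollary~2.8 of~\cite{CTW12} (stated as Observation~\ref{lem:lspobservations}), which give directly that $\expand(\LSC(G^*_h))\subseteq\LSC(G)$ and that lex shortness transfers to subgraphs, and the only thing proved on the spot is exactly your first (correct) step, namely that every $C\in\LSC(G)$, or $\shrink(C)$, lies in one part. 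To repair your argument you would either have to invoke those cited facts, or prove the weaker, correct statement that for the cycles in question the lex shortest path between any two cycle vertices is contained in the cycle in \emph{both} graphs, even when it is not the image or preimage of the corresponding path in the other graph.
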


Lemma~\ref{lem:lscdecomp} can be proven using the following observation. 
\begin{observation}[Lemma 2.5 and Corollary 2.8 in~\cite{CTW12}]
\label{lem:lspobservations}
Let $G$ be a weighted graph and let $G'$ be a subgraph of $G$. Let $P$ be a path in $G'$.
If $P$ is lex shortest in $G$, it is lex shortest in $G'$.

Furthermore, for $G$, $k$, and $G^*_1, \ldots, G^*_k$ as in Lemma~\ref{lem:lscdecomp}, we have $\expand(\LSC(G^*_h)) \subseteq \LSC(G)$, $1 \leq h \leq k$.
\end{observation}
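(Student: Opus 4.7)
My plan is to treat the two assertions of Observation~\ref{lem:lspobservations} separately, since the first is essentially a syntactic fact about Definition~\ref{def:lex} while the second requires a modest structural argument about the decomposition $\decomp(G,u,v)$.

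For the first assertion, I would argue directly that lex-shortness is monotone under passing to subgraphs. Suppose $P \subseteq G'$ is lex shortest in $G$ but some $u$-$v$ path $P' \neq P$ in $G'$ lex-beats $P$ in $G'$. Since $G' \subseteq G$, $P'$ is also a $u$-$v$ path in $G$. The lexicographic comparison in Definition~\ref{def:lex} depends only on the intrinsic data of $P$ and $P'$ (total weight, edge count, and minima of vertex labels along the symmetric difference of the vertex sets) and not on the ambient graph. Hence $P'$ lex-beats $P$ in $G$ as well, contradicting that $P$ is lex shortest in $G$.

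For the second assertion, fix $C^* \in \LSC(G^*_h)$ and let $C := \expand(C^*)$. First I would observe that $C$ is indeed a simple cycle in $G$: in case $\{u,v\} \in E$, nothing is expanded and $C = C^*$; otherwise the blue edge $\blue(u,v)$ is replaced by $\lsp(u,v)$, whose internal vertices lie in $V(H_{i(u,v)})$ and are therefore disjoint from $V(C^*) \setminus \{u,v\} \subseteq V(H_h)$. I would then establish the following key structural claim: for any pair $x,y \in V(C)$, the lex shortest path $\lsp_G(x,y)$ is contained in $G_h \cup \lsp(u,v)$. The argument rests on the fact that $\{u,v\}$ is a minimal vertex separator (Lemma~\ref{lem:K23minor}), so any $u$-$v$ path in $G$ outside of $G_h$ has weight at least $w(\lsp(u,v))$; if $\lsp_G(x,y)$ crossed through a detour in some $H_{i'} \neq H_h, H_{i(u,v)}$, replacing that detour with $\lsp(u,v)$ would give a $u$-$v$ path of no greater weight, and, by the uniqueness of $\lsp(u,v)$ as the lex shortest $u$-$v$ path in $G$, the resulting path would lex-beat $\lsp_G(x,y)$, a contradiction.

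Having confined $\lsp_G(x,y)$ to $G_h \cup \lsp(u,v)$, I would conclude via a case analysis: (i) both $x,y \in V(\lsp(u,v))$, where $\lsp_G(x,y)$ is a subpath of $\lsp(u,v)$ and therefore contained in $C$; (ii) both $x,y \in V(G_h)$, in which case every path in $G_h \cup \lsp(u,v)$ between them corresponds weight-preservingly to a path in $G^*_h$ by contracting $\lsp(u,v)$ to $\blue(u,v)$, so that $\lsp_G(x,y)$ corresponds to $\lsp_{G^*_h}(x,y) \subseteq C^*$ and expands into $C$; (iii) one vertex on $\lsp(u,v)$ and the other in $V(G_h)$, where $\lsp_G(x,y)$ splits at $u$ or $v$ into a subpath of $\lsp(u,v)$ and a path in $G_h$, each handled by (i) and (ii). The main obstacle is case (ii): the correspondence between paths through $\blue(u,v)$ in $G^*_h$ and paths through $\lsp(u,v)$ in $G$ preserves weight by construction, but the lexicographic tie-breakers (edge count and vertex labels) must be shown to translate faithfully. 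This is precisely the content of Lemma~2.5 and Corollary~2.8 in~\cite{CTW12}; having recognized this, I would invoke those results rather than re-derive them from scratch.
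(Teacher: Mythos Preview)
The paper does not prove Observation~\ref{lem:lspobservations}; it is stated as a direct citation of Lemma~2.5 and Corollary~2.8 in~\cite{CTW12}, so there is no in-paper proof to compare against. Your argument for the first assertion is correct and self-contained.

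For the second assertion, your sketch is structurally reasonable up to the case analysis, but the ending is circular: you defer the handling of the lexicographic tie-breakers to ``Lemma~2.5 and Corollary~2.8 in~\cite{CTW12}'', which are exactly the results the Observation is quoting. If your goal is a self-contained proof, this is not one; if your goal is merely to cite~\cite{CTW12}, the preceding sketch is superfluous. More substantively, your case~(ii) asserts that $\lsp_G(x,y)$ \emph{corresponds} to $\lsp_{G^*_h}(x,y)$ under the contraction of $\lsp(u,v)$ to $\blue(u,v)$, but this need not hold pointwise: replacing $\lsp(u,v)$ by a single edge changes both the edge count and the set of internal vertices, so two $x$--$y$ paths of equal weight can compare differently in $G^*_h$ than in $G$. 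What is actually claimed (and what~\cite{CTW12} proves) is the weaker containment $\expand(\LSC(G^*_h)) \subseteq \LSC(G)$, not a bijective correspondence of lex shortest paths; establishing this requires a more careful argument than ``the tie-breakers translate faithfully''. So the gap is real, not just a matter of citation hygiene.
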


\begin{proof}[of Lemma~\ref{lem:lscdecomp}]
The disjointness of the sets follows immediately from the facts that only cycles are contained in these sets and that the subgraphs $H_1, \ldots, H_k$ in $G-\{u,v\}$ are disjoint.
The inclusion $\bigcup_{h=1}^k \expand(\LSC(G^*_h)) \subseteq \LSC(G)$ follows from Observation~\ref{lem:lspobservations}.

It thus remains to show $\LSC(G) \subseteq \bigcup_{h=1}^k \expand(\LSC(G^*_h))$. 
To this end, let $C \in \LSC(G)$. 
We need to show that $E(C)\setminus E(\lsp_{G}(u,v))$ is contained in one of the $G^*_h$; Observation~\ref{lem:lspobservations} implies that for any such cycle either $C$ itself or the cycle $\shrink(C)$ with the lex shortest path $\lsp_{G}(u,v)$ replaced by the edge $\blue(u,v)$ must be contained in $\LSC(G^*_h)$.

Since the decomposition is done along the vertices $u$ and $v$, there is nothing to show in case $|C \cap \{u,v\}| \leq 1$. 
Indeed, any such $C$ or its short version $\shrink(C)$ is contained in exactly one connected component $G^*_h$.

Let us therefore assume that both vertices $u$ and $v$ are contained in $C$. Since $C$ is a lex short cycle in $G$, it must contain the lex shortest path $\lsp(u,v)$ between $u$ and $v$. 
The cycle $C$ is complemented by another path $P$ from $u$ to $v$; i.e., there exists a path $P$ from $u$ to $v$ such that $C=\lsp(u,v) \cup P$ and $V(P) \cap V(\lsp(u,v))=\{u,v\}$. By the structure of our decomposition, this path $P$ is certainly contained in one connected component $G^*_h$. Since $G^*_h$ contains also either $\lsp(u,v)$ itself or the placeholder edge $\blue(u,v)$ we have that either $C$ or $\shrink(C)$ is contained in $G^*_h$. As mentioned above, by Observation~\ref{lem:lspobservations} it follows that $C \in \expand(\LSC(G^*_h))$. 
\qed
\end{proof}

Before we are finally ready to prove Lemma~\ref{lem:mapping}, we observe that from Lemma~\ref{lem:K23minor} it follows that in the parts $G_{j(u,v)}$ and $G^*_{i(u,v)}$ there are no two vertex-disjoint paths between $u$ and $v$.

\begin{observation}
\label{cor:nocycle}
If a partial 2-tree $G$ contains a $K_{2,3}$-subdivision with branch vertices $\{u,v\}$, the subgraph $G_{j(u,v)}$ defined by $\de(G,u,v)$ and the subgraph $G^*_{i(u,v)}$ defined by $\decomp(G,u,v)$ do not contain a cycle that contains both vertices $u$ and $v$.
\end{observation}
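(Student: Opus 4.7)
I would argue by contradiction, showing that if $G_{j(u,v)}$ (the argument for $G^*_{i(u,v)}$ is identical, since both are simply $G[V(H) \cup \{u,v\}]$ for the appropriate component $H$) contains a cycle through both $u$ and $v$, then $G$ must contain a $K_4$-subdivision, contradicting that $G$ is a partial 2-tree.

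Assume such a cycle $C$ exists in $G_{j(u,v)}$. Since $C$ passes through $u$ and $v$, it splits at those vertices into two internally vertex-disjoint $u$-$v$ paths $P_1,P_2$, and their internal vertices all lie in the single connected component $H_{j(u,v)}$ of $G-\{u,v\}$. Next I would exploit the existence of the $K_{2,3}$-subdivision: by Lemma~\ref{lem:K23minor}, the three non-branch vertices of the subdivision lie in three distinct connected components of $G-\{u,v\}$, so there is a component $H_h$ with $h \neq j(u,v)$ that contains a non-branch vertex $w$, yielding a third $u$-$v$ path $P_3 \subseteq G[V(H_h) \cup \{u,v\}]$ whose internal vertices lie entirely in $H_h$. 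In particular $P_3$ is internally vertex-disjoint from $P_1$ and $P_2$.

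To exhibit a sixth ``chord'' path connecting $P_1$ to $P_2$, I would use connectedness of $H_{j(u,v)}$: the internal vertices of $P_1$ and those of $P_2$ are non-empty subsets of $H_{j(u,v)}$, so there is a path in $H_{j(u,v)}$ from some vertex in $V(P_1)\setminus\{u,v\}$ to some vertex in $V(P_2)\setminus\{u,v\}$. Among all such paths, pick a shortest one $Q$; by minimality, $Q$ is internally vertex-disjoint from $V(P_1) \cup V(P_2)$, with endpoints $x \in V(P_1)\setminus\{u,v\}$ and $y \in V(P_2)\setminus\{u,v\}$. Then the four vertices $u,v,x,y$ together with the six paths $P_1[u,x]$, $P_1[x,v]$, $P_2[u,y]$, $P_2[y,v]$, $Q$, and $P_3$ form a $K_4$-subdivision in $G$, yielding the desired contradiction since partial 2-trees forbid $K_4$-subdivisions.

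The main obstacle is the pairwise internal vertex-disjointness of the six paths, which I would verify explicitly: $P_1[u,x]$ and $P_1[x,v]$ share only the branch vertex $x$ (and similarly for $P_2$ split at $y$); $P_1$ and $P_2$ are internally vertex-disjoint by assumption; $P_3$ has internal vertices in $H_h$, which is disjoint from $H_{j(u,v)}$, so $P_3$ is internally disjoint from $P_1, P_2, Q$; and $Q$ was chosen to be internally disjoint from $V(P_1) \cup V(P_2)$. Once this is in place, the $K_4$-subdivision follows immediately and the contradiction concludes the proof for $G_{j(u,v)}$; the same argument, verbatim, handles $G^*_{i(u,v)}$.
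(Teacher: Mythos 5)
Your proof is correct, but it takes a different route from the one the paper intends. The paper states the observation as a direct consequence of Lemma~\ref{lem:K23minor}: if $G_{j(u,v)}$ (or $G^*_{i(u,v)}$) contained a cycle through $u$ and $v$, its two internally disjoint $u$--$v$ arcs would have all internal vertices in the single component $H_{j(u,v)}$ of $G-\{u,v\}$, and together with one $u$--$v$ path of the given $K_{2,3}$-subdivision running through a different component they would form a $K_{2,3}$-subdivision with branch vertices $u,v$ having two of its non-branch vertices in the same component of $G-\{u,v\}$ --- contradicting the ``separate components'' property of Lemma~\ref{lem:K23minor}. You instead aim for the forbidden-subdivision characterization of partial 2-trees and build an explicit $K_4$-subdivision on $u,v,x,y$, which forces you to construct the extra chord path $Q$ inside $H_{j(u,v)}$ (via connectedness and a shortest-connecting-path argument) and to verify internal disjointness of six paths; this is sound and self-contained, but strictly more work than the two-line argument the paper has in mind, and it uses the $K_4$-freeness of partial 2-trees rather than the component statement of Lemma~\ref{lem:K23minor}. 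One small point you leave implicit: you need both arcs of the cycle to have internal vertices (so that $x$ and $y$ exist), which holds only because $j(u,v)$ and $i(u,v)$ are defined precisely in the case $\{u,v\}\notin E$; it would be worth saying this explicitly. Note also that the paper's route shows the slightly stronger statement it actually uses, namely that there are no two internally vertex-disjoint $u$--$v$ paths at all in these parts, of which the no-cycle claim is a special case.
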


\begin{proof}[of Lemma~\ref{lem:mapping}]
For $1 \leq h \leq k$ let 
$m_h$ be the number edges in the graph $G_h$ and let 
$n_h$ be the number of vertices in $G_h$.
By the observation made in Section~\ref{sec:MCBs} the number of cycles in $\B_h$ equals $m_h-n_h+1$.
The number of cycles in $\EE$ therefore equals 
\begin{align*}
\sum_{h=1}^k{(m_h-n_h+1)} = m+k-1 - (n+2(k-1)) +k = m-n+1.
\end{align*}
We therefore need to show that the cycles in $\EE$ are independent and that they are of minimum total weight.

As for the independence assume that there exist cycles 
$C_1, \ldots, C_{\ell}$ in 
$\B_1 \cup \ldots \cup \B_k$ with 
$\ex(C_1) \oplus \ldots \oplus \ex(C_{\ell}) = 0$. 
For each $h$ let $R_h$ be the set of indices $r$ such that $C_r \in G_h$.
Since the only vertices that appear in more than one subgraph $G_h$ are the two vertices $u$ and $v$ and since the sum of cylces forms a disjoint union of cycles, $\sum_{r \in R_h}{C_r}=0$ must hold. 
This implies $R_h = \emptyset$ for all $h$ and shows the independence of the cycles in $\EE$.

Since the expansion of a cycle $C \in \B_h$ does not change its total cost, the total weight of the cycles in $\EE$ equals the total weight of the cycles in $\B_1 \cup \ldots \cup \B_k$. 
We need to show that there is no cycle basis of $G$ that has strictly smaller cost.
By Lemma~\ref{lem:lscdecomp} we know that, if we decompose $G$ along the lex shortest path $\lsp_{G}(u,v)$ between $u$ and $v$, i.e., if we apply $\decomp(G,u,v)$, then the set $\EE^*$ as defined in Lemma~\ref{lem:lscdecomp} forms a minimum cycle basis. 
We show that the cycles in $\EE^*$ are of the same total weight as the cycles in $\EE$. 
If $\{u,v\} \in E$, this is trivially true as the decompositions $\decomp(G,u,v)$ and $\de(G,u,v)$ are identical.
Furthermore, if $i:=i(u,v)=j(u,v)=:j$, then $\short(u,v)=\lsp(u,v)$ since by Observation~\ref{cor:nocycle} this is the only path between $u$ and $v$ that is contained in $G^*_i$. That is, also in case $i=j$ the two decompositions are identical, implying in particular the minimality of the weights in $\B_1 \cup \ldots \cup \B_k$.
We thus assume in the following that $\{u,v\} \notin E$ and that $i \neq j$. 
We define a bijective mapping 
\begin{align*}
\psi:\LSC(G^*_1) \cup \ldots \cup \LSC(G^*_k) \rightarrow \LSC(G_1) \cup \ldots \cup \LSC(G_k)
\end{align*}
such that $w(\psi(C))=w(C)$ for all $C \in \LSC(G^*_1) \cup \ldots \cup \LSC(G^*_k)$. 
From this we get the statement by evoking again Lemma~\ref{lem:LSCMCB} which tells us that for each $h$ the sum of the weights of the cycles in $\LSC(G_h)$ equals the sum of the weights of the cycles in $\B_h$.
For $h \notin \{i,j\}$, $G_h = G^*_h \setminus \{ \blue(u,v) \} \cup \green(u,v)$ holds. 
For $C \in \LSC(G^*_h)$ we can therefore set $\psi(C):=C\setminus\{ \blue(u,v) \} \cup \green(u,v)$ if $\blue(u,v) \in C$, and $\psi(C):=C$ otherwise.
As we have mentioned in Observation~\ref{cor:nocycle}, the only path between $u$ and $v$ in $G^*_i$ is $\lsp(u,v)$. In particular, there is no cycle that contains both $u$ and $v$. It is thus easy to see that $\LSC(G^*_i) \subseteq \LSC(G_i)$. Set $\psi(C):=C$ for all $C \in \LSC(G^*_i)$. 
Since there is one more edge in $G_i$ as there is in $G^*_i$, the number of cycles in $\LSC(G_i)$ is by one larger than the number of cycles in $\LSC(G^*_i)$. Note that $D:=\{\green(u,v)\} \cup \lsp_{G}(u,v)$ is a lex short cycle in $G_i$ that is not in the image $\{\psi(C) \mid C \in \LSC(G^*_i)\}$. The cost of $D$ is $2 w(\lsp_{G}(u,v))$. 
Note that the situation is symmetric for $G^*_j$ and $G_j$. We thus set $\psi(D^*):=D$ for $D^*:=\{\blue(u,v)\} \cup \short(u,v)$ and we set $\psi(C):=C\setminus\{\blue(u,v)\} \cup \short(u,v)$ for all other cycles in $\LSC(G^*_j)$. As the cost $w(D^*)$ equals $2 w(\lsp_{G}(u,v))$ as well, the claim follows.
\qed
\end{proof}

\section{Computing an MCB in Weighted Partial 2-Trees}
\label{sec:main}

\begin{algorithm2e}[t]
Compute a suitable tree decomposition $T$ of $G$\label{line:suitableTreeDecomp}\;
Find the set $L$ of long edges in $G$\label{line:findLongEdges}\;
$E \leftarrow E \setminus L$\label{line:deleteLong}\;
\For{each internal bag $Y_1 \in T$ (in any order) and every $u,v \in Y_1$}
{
Let $Y_2, \ldots Y_k$ be the children of $Y_1$ such that for $2 \leq i \leq k, Y_1 \cap Y_{i} = \{u,v\}$\;
\If{$k \geq 3$}
	{
		\lFor{$2 \leq i\leq k$}
		{delete the link $(Y_1, Y_i)$\label{line:deletinglinks}\;}
		\If{$\{u,v \} \notin E$}{
				Compute the weight $w(P)$ of a shortest path $P$ between $u$ and $v$\label{line:distancequery}\;
				Find an intermediate vertex $y$ of $P$ and a bag $B$ containing $y$\label{line:markbluestart}\;
				Compute $j$ such that either $j \in \{2,\ldots,k\}$ and the subtree rooted at $Y_j$ contains $B$ or $j=1$ otherwise (indicating that $B$ is in the subtree containing $Y_1$)\label{line:intermediate}\label{line:markblueend}\;
				\For{$1 \leq h \neq j \leq k$}
					{Add the new edge $\green(u,v)$ to $Y_h$ and assign to it weight $w(P)$\;}
					}
				}
	}
Let $\tilde{T}_1, \ldots, \tilde{T}_r$ be the connected components of $T$\;
Obtain the outerplanar graphs $\tilde{G}_1, \ldots, \tilde{G}_r$ that correspond to $\tilde{T}_1, \ldots, \tilde{T}_r$\label{line:Gi}\;
\label{line:findInternalFaces}Compute $\LSC(\tilde{G}_1), \ldots, \LSC(\tilde{G}_r)$ using~\cite{LiuL10}\;
Output (in an implicit or explicit representation): $\MCB(G) = \ex(\LSC(\tilde{G}_1)) \uplus \cdots \uplus \ex(\LSC(\tilde{G}_r)) \uplus \{\{e\} \cup \short(u,v) \mid e=\{u,v\} \in L\}$\label{line:output}\; 
\caption{A linear time algorithm to compute a minimum cycle basis of a weighted 2-connected partial 2-tree $G$}
\label{alg:main}
\end{algorithm2e}

As we now have all the technical tools at hand, we can finally give a detailed description of our algorithm, whose pseudo-code can be found in Algorithm~\ref{alg:main}.

We first compute a suitable tree decomposition $T$ of the $2$-connected weighted partial 2-tree $G$. According to Lemma~\ref{lem:makesuitable}, this can be done in linear time. We then compute the set $L$ of long edges, i.e., the edges whose length is greater than the length of a shortest path between their two endpoints (cf.\ Lemma~\ref{lem:tight}). By using distance queries between the endpoints of every edge, the data structure of Lemma~\ref{lem:DistanceOracleBoost} allows to do this in linear time. We delete the edges in $L$ and consider thus $G \setminus L$ until Line~\ref{line:Gi} of Algorithm~\ref{alg:main} is reached; clearly, $T$ is still a suitable tree decomposition of $G \setminus L$.

As described in the high-level overview, we decompose the graph iteratively into outerplanar graphs along $K_{2,3}$-subdivisions (cf.\ Section~\ref{sec:highlevel}). Algorithmically, Corollary~\ref{cor:graphTreeDecomp2} allows us to detect efficiently whether the current graph contains a $K_{2,3}$-subdivision: We just have to check whether $T$ contains at least three bags each of which contains the same two vertices $u$ and $v$. Since $T$ is suitable, this can be done by fixing every inner vertex $Y_1$ of $T$ (in any order) and counting the number $k$ of children of $Y_1$ whose links are labeled identically, say with $\{u,v\}$ (cf.\ the first three lines of the main loop of Algorithm~\ref{alg:main}). If $k \geq 3$, we have identified a $K_{2,k}$-subdivision by Lemma~\ref{obs:graphTreeDecomp1}.

If $\{u,v\} \in E$ (the existence of such an edge can be efficiently looked up by a table of size $O(n)$ with Lemma~\ref{lem:DistanceOracle}), we can simply decompose the graph into the parts defined by $\de$ by deleting $k$ links in $T$ (cf.\ Line~\ref{line:deletinglinks}). Otherwise, we additionally fix a shortest path $P$ between $u$ and $v$ and augment all parts except the one containing $P$ with a green edge that replaces $P$. For this purpose we have to find the weight of $P$ and the part that contains $P$. The first can be done in constant time by using a distance query of Lemma~\ref{lem:DistanceOracleBoost} (cf.\ Line~\ref{line:distancequery}). The latter is computed by identifying the subtree of the tree decomposition (the one after deleting the links) that contains an intermediate vertex $y$ of $P$ (cf.\ Line~\ref{line:markblueend}). Such a vertex can be computed in constant time, using once more the data structure of Lemma~\ref{lem:DistanceOracleBoost}.

Finally, we end up with several components $\tilde{T}_1,\ldots,\tilde{T}_r$ of the original tree decomposition $T$; these are easy to find in linear time, e.g., by depth-first search. We can compute the graphs $\tilde{G}_1,\ldots, \tilde{G}_r$ that are represented by these tree decompositions in linear total time by simply collecting the vertices and edges in all bags. Note that the total number of edges in $\tilde{G}_1,\ldots, \tilde{G}_r$ is still in $O(n)$, as we add at most $\degree(V_1)$ new green edges for each bag $V_1$, where $\degree(V_1)$ is the degree of bag $V_1$ in $T$. Every $G_i$ is outerplanar; thus, we can compute the $\LSC$ of every $G_i$ in linear time (cf.\ Theorem~\ref{thm:LiuL}). According to Lemmata~\ref{lem:tight} and~\ref{thm:mapping1}, the output in Line~\ref{line:output} is then a minimum cycle basis of $G$.

This concludes the first part of our main result, Theorem~\ref{thm:main}. It remains to clarify how $\MCB(G)$ is represented in the output. 
For an implicit representation, we store $G$, $L$, $\LSC(\tilde{G}_1), \ldots, \LSC(\tilde{G}_r)$ and a trace of the main loop of Algorithm~\ref{alg:main}. 
Clearly the space consumption is in $O(n)$. For every long edge $e=\{u,v\}$ in $L$, we can compute an arbitrary shortest path between $u$ and $v$ in $G$ in time proportional to its length; as the choice of this path does not matter due to Lemma~\ref{lem:tight}, this will complete every long edge to a cycle of an $\MCB(G)$. 
The trace stores every decision that was made in the decomposition $\de(G \setminus L)$. It thus allows to reconstruct the whole decomposition in linear time, as Algorithm~\ref{alg:main} takes linear time.

In particular, we can identify for every decomposition step that is performed on a graph $H$, the part $H_{j(u,v)}$ of $\de(H,u,v)$ in which we had chosen the shortest path $\short(u,v)$ between $u$ and $v$, whose length we computed. An explicit representation of $\ex(\LSC(\tilde{G}_1)) \uplus \cdots \uplus \ex(\LSC(\tilde{G}_r))$ is then computed by constructing the shortest path $\short(u,v)$ for every decomposition step explicitly via Lemma~\ref{lem:DistanceOracleBoost} (in contrast to computing only its length and an intermediate vertex, as in the original decomposition).

According to Lemma~\ref{lem:DistanceOracleBoost}, computing these shortest paths takes time proportional to the number of edges in these paths.
This gives the desired running time of $O(\size(\MCB(G)))$, where $\size(\MCB(G))$ is the number of edges in $\MCB(G)$ counted according to their multiplicity.

\section{Discussion}\label{sec:conclusions}
We have shown that an implicit representation of a minimum cycle basis of a weighted partial 2-tree can be computed in linear time. 
It remains a challenging question if our result can be extended to partial $k$-trees for $k>2$. We remark that it was noted in~\cite{CTW12} that already for partial 3-trees the set of lex short cycles do not necessarily form a minimum cycle basis.
Since in particular the proof of Theorem~\ref{thm:mapping1} is based on this, extending our result to partial 3-trees may therefore require substantially new ideas.

\subsubsection*{Acknowledgments.}
We would like to thank Geevarghese Philip for pointing us to Lemma~\ref{lem:Diestel}, which significantly simplified our proof of the runtime bound. We also thank the anonymous reviewers for providing their feedback which has helped us to improve the presentation of our work.

Carola Doerr gratefully acknowledges support from the Alexander von Humboldt Foundation and the Agence Nationale de la Recherche (project ANR-09-JCJC-0067-01).

G.\ Ramakrishna would like to thank his adviser N.S.\ Narayanaswamy for fruitful discussions during the early stages of this work. He would also like to thank Kurt Mehlhorn for providing him the opportunity to do an internship at the Max Planck Institute for Informatics.
}


\end{document}